
\documentclass[journal]{IEEEtran}
%


%

\usepackage{bm}
\usepackage{xcolor}
\usepackage{graphicx}
\usepackage{mathptmx}
\usepackage{amsmath}
\usepackage{amssymb}
\usepackage{amsthm}
\usepackage[numbers, sort&compress]{natbib}

\usepackage{amsmath}
\usepackage{algpseudocode}
\usepackage{algorithm}
\usepackage{threeparttable}
\usepackage{listings}
\usepackage{float}
\usepackage{booktabs}
\usepackage{mathtools} 
\usepackage{cleveref }
\usepackage{makecell}

\newtheoremstyle{mytheorem}
  {3pt}
  {3pt}
  {\itshape}
  {}
  {\itshape}
  {.}
  {.5em}
  {\thmname{#1}\thmnumber{{ }#2}%
   \thmnote{ {\the\thm@notefont(#3)}}}
\makeatother
\theoremstyle{mytheorem}

\newtheorem{thm}{Theorem}[section]
\newtheorem{lem}[thm]{Lemma}

\newtheorem{cor}{Corollary}

\newtheorem{defn}{Definition}[section]

\newtheorem{rmk}{Remark}

%

%
\ifCLASSINFOpdf
\else
\fi
\hyphenation{op-tical net-works semi-conduc-tor}

\begin{document}
%
\title{Waveform Design for Optimal PSL Under Spectral and Unimodular Constraints via Alternating Minimization}
%
%
%

\author{Chin-Wei Huang, Li-Fu Chen, Borching Su, \IEEEmembership{Member,~IEEE}
\thanks{}
\thanks{}
\thanks{}}

%
%

\markboth{}%
{Shell \MakeLowercase{\textit{et al.}}: Bare Demo of IEEEtran.cls for IEEE Journals}
%



\maketitle

\begin{abstract}
In an active sensing system, waveforms with good auto-correlations are preferred for accurate parameter estimation. Furthermore, spectral compatibility is required to avoid mutual interference between devices as the electromagnetic environment becomes increasingly crowded. 
Waveforms should also be unimodular due to hardware limits. 
In this paper, a new approach to generating a unimodular sequence with an approximately optimal peak side-lobe level (PSL) in auto-correlation and adjustable stopband attenuation is proposed. 
The proposed method is based on alternating minimization (AM) and numerical results suggest that it outperforms  existing methods in terms of PSL.
We also develop a theoretical lower bound for the PSL minimization problem under spectral constraints and unimodular constraints, which can be used for the evaluation of the results in various works about this waveform design problem. 
It is observed in the numerical results that the PSL of the proposed algorithm is close to the derived lower bound.

\end{abstract}

\begin{IEEEkeywords}
Active sensing system, waveform design, peak side-lobe level, spectral compatibility, alternating minimization, Lagrangian, dual problem, lower bound.
\end{IEEEkeywords}

%
\IEEEpeerreviewmaketitle

\section{Introduction}\label{sec:intro}
\IEEEPARstart{I}{n} an active sensing system, such as radar or sonar, valuable properties of the targets can be determined by transmitting waveforms to an area of interest and analyzing the received signals \cite{he_li_stoica_2012}.
For example, given the propagation speed of radar waves, we can estimate the distance between the radar and the target by measuring the round-trip time delay. 
The target's speed can be calculated by measuring the Doppler frequency shift of the received signal \cite{skolnik2008radar}.
It comes as no surprise that a good design of the transmitted waveform can not only lead to accurate parameter estimation but also a reduced computational burden at the receiver \cite{he_li_stoica_2012, skolnik2008radar, He2009}.

Since the matched filter is commonly applied for range compression to maximize the signal-to-noise ratio \cite{he_li_stoica_2012, He2010}, a waveform whose auto-correlation exhibits low side-lobes is desirable \cite{He2009, levanon2004radar}. 
Furthermore, the unimodular property is generally desirable for a radar waveform because of the practical hardware restriction \cite{he_li_stoica_2012, Gri2015}.
In general, the auto-correlation quality can be quantified through two metrics: integrated side-lobe level (ISL) and peak side-lobe level (PSL). 
Numerous design methods of unimodular waveform for low ISL were proposed \cite{He2009, Mohammad2017, Song2016April, Song2016June}, such as the majorization-minimization (MM) method proposed by Song et al. \cite{Song2016April, Song2016June}, and the well-known Cyclic algorithm-new (CAN) proposed by He et al. \cite{He2009}, etc. 
In some early studies, sequences with low PSL were often designed in closed-form, such as the works on fixed-length polyphase Barker sequences or some families of polyphase sequences \cite{Frank1963, Golomb1965}. 
However, sequence designs that directly optimize the PSL has not been found until the recent years.
In  \cite{Song2016April}, Song et al. proposed a design algorithm that becomes the first to put the PSL in the objective function, taking advantage of the $\ell_{p}$-norm approximation of PSL.
Following this method, many variant methods have been proposed since then \cite{Song2016June, Hamid2017, Raei2022, Cui2018, FAN2021107960, LU2022}. 

Another important aspect of waveform design that receives increasing emphasis has been put on spectral compatibility since the proposal of cognitive radar. 
For cognitive radar, it is essential to adapt the spectrum of the transmitted waveform based on the changing environment \cite{Gri2015, Haykin2006, Michael2010, Aubry2014}. 
As a result, more and more researchers considered spectral suppression when designing waveform sequences \cite{He2010, Aubry2014, Aubry2020}. The ISL minimization with spectral constraints was also widely studied. 
In \cite{He2010}, the authors proposed the predominant stopband cyclic algorithm new (SCAN) algorithm. 
It minimizes the ``almost equivalent'' ISL metric along with the total stopband energy. 
In \cite{BISKIN2020102867}, the authors used MM method to minimize the ``almost equivalent'' ISL metric along with the stopband spectral energy. 
As for the PSL minimization with spectral constraints, the studies mostly started only in the recent years because of its extreme difficulty.
In \cite{Cui2018}, the authors proposed the frequency nulling modulation (FNM) to jointly minimize the $\ell_p$-norm approximated PSL and the stopband energy. 
In \cite{FAN2020107450}, the authors used the proximal method of multipliers (PMM) to minimize PSL with spectral constraints for multi-sequence design. 
In \cite{FAN2021107960}, the authors proposed the block successive upper-bound minimization (BSUM) technique to minimize the PSL with spectral constraints.
In \cite{LU2022}, the researchers combined the MM method with the PMM to include the spectral constraints in the local PSL minimization problem. 
All the above methods used either the MM method \cite{Cui2018, LU2022} or the ``almost equivalent'' property \cite{FAN2020107450, FAN2021107960} to address the quartic form in the optimization processes, and most of them (except for \cite{FAN2020107450}) rely on the $\ell_p$-norm approximation for dealing with the PSL optimization. 
However, the performance will be limited by the $p$ in the $\ell_p$-norm approximation due to the increasing computation for a better approximation \cite{LU2022}. 

{\tiny 
}

In this paper, a new method for PSL minimization problem for a unimodular sequence under the spectral constraint is proposed.
The main technique involved in the propose method is the transformation of the PSL minimization problem into a bi-convex problem, which can be solved by the exact penalty approach \cite{AM2013,Demir2014, Huang2022}. 
The proposed method, as the numerical results will suggest, outperforms existing methods by a considerable margin. 
The proposed method does not use the $\ell_p$-norm approximation, and the alternating minimization involved in the exact penalty approach allows us to tackle the quartic problem without resorting to the ``almost equivalent'' property or the MM method. 
In addition, our approach allows designers to limit the peak stopband energy with adjustable resolution.

The other important contribution of the paper is the derivation of a lower bound for the PSL minimization problem under the spectral constraint and the unimodular constraint. 
Such a lower bound sheds light on the analysis of the optimality gap and enables the evaluation the waveform performance by its distance between the attained PSL. 
In \cite{Welch_bound}, the well-known lower bound for PSL in multiple sequence design was proposed, but the corresponding lower bound for single sequence design is zero.
The largest lower bound for PSL in single sequence design, to the authors' best knowledge, is still the trivial one with value one \cite{McCormick2017}, not to mention the one with spectral constraints. 
In comparison, the proposed lower bound is much larger in the considered cases and even close to the PSL of our designed waveforms.


The rest of the paper is organized as follows. In Section \ref{sec:PSL_min}, we formulate the PSL minimization problem under the constraints and present the proposed algorithm. 
In Section \ref{Sec:Numerical_Examples}, numerical results are shown to demonstrate the advantages of the proposed methods. 
In Section \ref{sec:PerformanceAnalysis}, a lower bound of the main problem is derived followed by a performance analysis of the proposed algorithm.  
In Section \ref{sec:discussion}, some discussions about the proposed algorithm with the related works are made.
Conclusions are made in Section \ref{sec:conclusion}.

\subsection{Notations}
\label{Sec:Notations}
Boldface lowercase letters denote vectors, while boldface uppercase letters denote matrices. Operators $\left( \cdot \right)^H,\left( \cdot \right)^T$ and $\lVert  \cdot  \rVert_p$ denote the conjugate transpose, transpose and $p$-norm for matrices/vectors, respectively. 
We use $\mathrm{tr(\cdot)}$ to denote the trace of a matrix and $\left( \cdot \right)^*$ to denote the conjugate for a complex number.
For any positive integer $n$, $\mathbb{Z}_{n}$ stands for the set $\left\{ 0,1,\dots,n-1 \right\}$. 
The $n$-dimensional real and complex vector spaces are expressed as $\mathbb{R}^{n}$ and $\mathbb{C}^{n}$, respectively. 
The set of all $n\times n$ Hermitian matrices is denoted by $\mathbb{H}^n$, and the set of all positive semidefinite matrices is denoted by $\mathbb{H}^n_{+}$. 
For $\mathbf{A},\mathbf{B} \in \mathbb{H}^n$, the notation $\mathbf{A}\succeq\mathbf{B}$ means $\mathbf{A}-\mathbf{B}\in\mathbb{H}^n_+$. 
For $\mathbf{a},\mathbf{b}\in \mathbb{R}^n$, the notation $\mathbf{a}\succeq\mathbf{b}$ means all elements in the vector $\mathbf{a}-\mathbf{b}$ are non-negative. 
We adopt zero-based indexing throughout the paper. 
For a vector $\mathbf{x}$ and a matrix $\mathbf{X}$, the $i$-th entry of $\mathbf{x}$ and the $(i,j)$-th entry of $\mathbf{X}$ are denoted by $x_i$ and $\mathbf{X}(i,j)$, respectively. 
For convenience, all zero matrices and vectors are all denoted by $0$.

\section{PSL Minimization Problem}\label{sec:PSL_min}

\subsection{Problem Formulation}
Many works in the previous years have been focusing on the research about minimizing the integrated side-lobe (ISL) with unimodular constraints \cite{he_li_stoica_2012,He2010}. 
However, when the scenario such as active sensing with the threshold detection being applied for the target detection is considered, a sequence having a narrow side-lobe with a high level may cause a high false-alarm rate while leading to a high PSL but low ISL. Thus, in this case, the detection performance is dictated by the PSL instead of ISL \cite{Hamid2017}.
Moreover, since the spectral regulations for communications usually adopt the spectral masks \cite{3gpp.38.101-1} to define the acceptable spectral leakage, what we care about will be the peak energy instead of the total energy of the spectral leakage. Therefore, for the compatibility of communication applications, it is more sensible to design the waveform with the constraint on its maximal energy in the stopband. 

From the above reasons, we aim to design a unimodular sequence with low PSL and the constraint on the maximal energy in stopband. Before formulating the design problem into an optimization problem, we need to define the autocorrelation for the explicit expression of PSL first. 
\begin{defn}
Given a sequence $\mathbf{x} \in \mathbb{C}^N$, the aperiodic auto-correlation of $\mathbf{x}$ is defined as 
$$
r_\ell=\sum_{n=\ell}^{N-1}x_nx^*_{n-\ell},~\ell=0,\dots,N-1.
$$
\end{defn}
\begin{rmk}
It is easy to verify that $r_\ell = r_{-\ell}^*$ for all $\ell$.
In particular, $r_0$ is real-valued and is called the in-phase correlation.
All the other $r_\ell$'s are called the correlation side-lobes.
\end{rmk}
From the above definition, the peak side-lobe level (PSL) is defined as
\begin{equation}\label{eq:psl}
    \operatorname{max}\{|r_\ell|\}_{\ell=1}^{N-1}.
\end{equation}
Then, the optimization problem is formulated as follows:
\begin{subequations}\label{prob:original}
\begin{align}
& \underset{\mathbf{x}\in\mathbb{C}^{N}}{\mathrm{minimize}} && \operatorname{max}\{|r_\ell|\}_{\ell=1}^{N-1} \label{prob:originala}\\
& \text{subject to} && |X(f)|^2\leq U_{\mathrm{max}}, \forall f \in \mathcal{F}_{\mathrm{stop}}\label{prob:originalb}\\
&&& |x_n|=1, \forall n\in \mathbb{Z}_{N},\label{prob:originalc}
\end{align}
\end{subequations}
where $\mathbf{x}$ is the desired sequence, the objective function in \eqref{prob:originala} is the PSL, $X(f)$ is the discrete-time Fourier transform of the desired sequence at normalized frequency $f$, the constraint \eqref{prob:originalb} is the constraint of spectral compatibility and the constraint \eqref{prob:originalc} is the constraint of constant modulus.

However, the optimization problem \eqref{prob:original} is not convex and some reformulation is necessary. 
We propose to reformulate the problem through the technique of semi-definite relaxation (SDR).
Therefore, in the rest of this section, we will reformulate the objective function and the constraints in the problem \eqref{prob:original} one by one in Section \ref{sec:correlation}, \ref{sec:spectral_compatibility}, \ref{sec:unimodular}, and give a further transformation of the problem \eqref{prob:original} in Section \ref{sec:reformulation} for the proposed algorithm in Section \ref{sec:optimization_algorithm}.

\subsubsection{Auto-Correlation}\label{sec:correlation}
We first define the following nilpotent matrix to turn the auto-correlation into a quadratic form, which may be easier to address in an optimization problem.
\begin{defn}
For any positive integer $N$, the upper shift matrix  $\mathbf{N}_N$ is an $N \times N$ matrix defined by 
$$
\mathbf{N}_N= \begin{bmatrix}0 & 1 & 0 &\cdots & 0\\
0 & 0 & 1 & \ddots & \vdots\\
\vdots & & \ddots & \ddots & 0 \\
\vdots & &  & \ddots & 1 \\
0& \cdots & \cdots & \cdots & 0
\end{bmatrix}.
$$
\end{defn}
With the upper shift matrix $ \mathbf{N}_N$, we can express $r_\ell$ as
\begin{equation}\label{cst:psl_matrix}
    r_\ell= \sum_{n=\ell}^{N-1}x^{*}_{n-\ell}x_n =\mathbf{x}^H\mathbf{N}_N^\ell\mathbf{x},
\end{equation}
for $\ell \in \mathbb{Z}_N$.
Note that the quadratic form in \eqref{cst:psl_matrix} is of complex value because any positive integer power lower than $N$ of the upper shift matrix $\mathbf{N}_N$ is non-Hermitian. 

\subsubsection{Spectral Compatibility}\label{sec:spectral_compatibility}
Since a quadratic form may be easier to address than the square of an absolute value in an optimization problem, we first define the following vector for the formulation of the quadratic form.
\begin{defn}
Given a real number $f$, the vector $\mathbf{f}(f)$ is defined as
\begin{equation*}
\mathbf{f}(f)=[1~e^{j(2\pi f)\times 1}~\cdots~e^{j(2\pi f)\times(N-1)}]^T.
\end{equation*}
In addition, we define the matrix $\mathbf{F}(f)$ associated with $\mathbf{f}(f)$ to be
\begin{equation*}
\mathbf{F}(f)=\mathbf{f}(f)\mathbf{f}(f)^H.
\end{equation*}
\end{defn}
Without loss of generality, we only consider the normalized frequencies from $0$ to $1$. Given a subset $\mathcal{F}_{\mathrm{\text{stop}}}\subseteq \left[0,1\right]$, we aim to control the energy spectral density of $\mathbf{x}\in \mathbb{C}^N$ over $\mathcal{F}_{\mathrm{\text{stop}}}$. That is, we want to suppress $|X(f)|^2$ for $f \in \mathcal{F}_{\mathrm{\text{stop}}}$, where $X(f)$ is the discrete-time Fourier transform of $\mathbf{x}$.
Since 
\begin{align*}
    X(f)=\sum_{n=0}^{N-1}x_n e^{-j(2\pi f) \times n}=\mathbf{f}(f)^H \mathbf{x},
\end{align*}
$|X(f)|^2$ can be written as 
\begin{align*}
    |X(f)|^2=\mathbf{x}^H\mathbf{F}(f)\mathbf{x}.
\end{align*}
Therefore, if we want to control the spectrum of $\mathbf{x}$ over $\mathcal{F}_{\mathrm{stop}}$, we can set 
\begin{align}\label{cst:spectral}
\mathbf{x}^H\mathbf{F}(f)\mathbf{x}\leq U_{\mathrm{max}}, \forall f \in\mathcal{F}_{\mathrm{\text{stop}}},
\end{align}
which limits its energy spectral density over $\mathcal{F}_{\mathrm{stop}}$ to $U_{\mathrm{max}}$ maximum.

In practice, instead of considering all points of $\mathcal{F}_{\mathrm{\text{stop}}}$, we uniformly choose sufficiently many $N_f$ points from $\mathcal{F}_{\mathrm{\text{stop}}}$ and take them into our constraints since $\mathcal{F}_{\mathrm{\text{stop}}}$ is uncountable for the most part. 
That is, the constraints \eqref{cst:spectral} correspond to the constraints
\begin{align}\label{cst:spectral_Nf}
\mathbf{x}^H\mathbf{F}(f_i)\mathbf{x}\leq U_{\mathrm{max}}, \forall i \in \mathbb{Z}_{N_f},
\end{align}
where $f_i$'s are $N_f$ points uniformly chosen from $\mathcal{F}_{\mathrm{\text{stop}}}$. 
\begin{rmk}
In fact, $U_{\mathrm{max}}$ does not need to be constant over $\mathcal{F}_{\mathrm{\text{stop}}}$. We can also set different $U_{\mathrm{max}}$'s for different $f_i$'s for the purpose of fitting a spectrum mask.
\end{rmk}

\subsubsection{Unimodular Sequence}\label{sec:unimodular}
Similarly, since it is simpler to deal with a quadratic form than with an absolute value in an optimization problem, we first define the following matrix for the formulation of the quadratic form.
\begin{defn}
Given a positive integer $N$, the $N \times N$ matrix $\mathbf{E}_{n}^{(N)}$ is defined as
$$\mathbf{E}_{n}^{(N)}=\mathbf{e}_n^{(N)}(\mathbf{e}_n^{(N)})^{T},$$
where $\mathbf{e}_n^{(N)}$ is the $n$-th $N$-dimensional standard vector, and $n\in\mathbb{Z}_N$.
\end{defn}
A unimodular sequence $\mathbf{x} \in \mathbb{C}^N$ is of the form $$\mathbf{x}=[e^{j\theta_0}~e^{j\theta_1}~\dots ~ e^{j\theta_{N-1}}]^T,$$ where $\theta_n\in \mathbb{R}$ for all $n \in \mathbb{Z}_{N}.$ Since $n\in \mathbb{Z}_N$, $|x_n|=1$ is tantamount to $|x_n|^2=\mathbf{x}^H\mathbf{E}_{n}^{(N)}\mathbf{x} =1$, a sequence $\mathbf{x}\in \mathbb{C}^N$ is unimodular if and only if
\begin{align} \label{cst:unimodular}
\mathbf{x}^H\mathbf{E}_{n}^{(N)}\mathbf{x} =1, \forall n\in \mathbb{Z}_{N}.
\end{align}

\subsubsection{Problem Reformulation}\label{sec:reformulation}
Combining \eqref{eq:psl}, \eqref{cst:psl_matrix}, \eqref{cst:spectral_Nf}, and \eqref{cst:unimodular}, we have the reformulated problem: 
\begin{subequations}\label{prob:MatrixRepresentationOriginal}
\begin{align}
& \underset{\mathbf{x}\in\mathbb{C}^{N}}{\mathrm{minimize}} && \underset{\ell \in \mathbb{Z}_{N}\backslash \{0\}}{\mathrm{max}}\{\vert\mathbf{x}^H\mathbf{N}_N^{\ell}\mathbf{x}\vert^2\} \label{prob:MatrixRepresentationOriginal_a}\\
& \text{subject to} &&  \mathbf{x}^H\mathbf{F}(f_i)\mathbf{x}\leq U_{\mathrm{max}}, \forall i \in \mathbb{Z}_{N_f}\label{prob:MatrixRepresentationOriginal_b}\\
&&& \mathbf{x}^H\mathbf{E}_{n}^{(N)}\mathbf{x} =1, \forall n\in \mathbb{Z}_{N},\label{prob:MatrixRepresentationOriginal_c}
\end{align}
\end{subequations}
where $f_i$'s are $N_f$ points uniformly chosen from $\mathcal{F}_{\mathrm{\text{stop}}}$. Note that we take the square of the absolute value of the autocorrelation in \eqref{prob:MatrixRepresentationOriginal_a} for the convenience of the further reformulation. This reformulated problem is equivalent to the problem \eqref{prob:original} because the square function is strictly increasing for non-negative inputs.
Then, to eliminate the maximum operator in the objective function, we can reformulate the problem \eqref{prob:MatrixRepresentationOriginal} as its epigraph representation
\begin{subequations}\label{prob:MatrixRepresentation}
\begin{align}
& \underset{\mathbf{x}\in\mathbb{C}^{N}, t\in \mathbb{R}}{\mathrm{minimize}} && t  \label{prob:MatrixRepresentation_a}\\
& \text{subject to} && \mathrm{tr}((\mathbf{N}_{N}^T)^\ell\mathbf{x} \mathbf{x}^H\mathbf{N}_{N}^\ell\mathbf{x}\mathbf{x}^H) \leq t, \forall \ell \in \mathbb{Z}_N\backslash \{0\} \label{prob:MatrixRepresentation_b}\\
&&& \mathbf{x}^H\mathbf{F}(f_i)\mathbf{x}\leq U_{\mathrm{max}}, \forall i \in \mathbb{Z}_{N_f}  \label{prob:MatrixRepresentation_c}\\
&&& \mathbf{x}^H\mathbf{E}_{n}^{(N)}\mathbf{x} =1, \forall n\in \mathbb{Z}_{N},\label{prob:MatrixRepresentation_d}
\end{align}
\end{subequations}
where $f_i$'s are $N_f$ points uniformly chosen from $\mathcal{F}_{\mathrm{\text{stop}}}$. 
\begin{rmk}
From \eqref{cst:psl_matrix}, we can see that $$\mathrm{tr}((\mathbf{N}_{N}^T)^\ell\mathbf{x} \mathbf{x}^H\mathbf{N}_{N}^\ell\mathbf{x}\mathbf{x}^H)=
(\mathbf{x}^H\mathbf{N}_{N}^\ell\mathbf{x})( \mathbf{x}^H{\mathbf{N}_{N}^T}^\ell\mathbf{x})=r_\ell r^*_\ell=|r_\ell|^2.$$
Hence, the constraint \eqref{prob:MatrixRepresentation_b} implies $\operatorname{max}\{|r_\ell|^2\}_{\ell=1}^{N-1} \leq t$, and vice versa.
\end{rmk}

In the following, to address this nonconvex problem \eqref{prob:MatrixRepresentation}, we will introduce a theorem from \cite{Demir2014} and use it to reformulate the problem \eqref{prob:MatrixRepresentation} into an equivalent biconvex problem for usage of the exact penalty approach \cite{Demir2014, Huang2022}. The theorem is introduced as follows with a more straightforward proof than the one given in \cite{Demir2014}.

\begin{thm}\label{thm:TwoMatrixIneq}
For any matrices $\mathbf{A},\mathbf{B} \in \mathbb{H}^n_{+}$, the following inequality always holds.
\begin{equation*}
	\mathrm{tr}(\mathbf{A}\mathbf{B}) \leq \mathrm{tr}(\mathbf{A})\mathrm{tr}(\mathbf{B}).
\end{equation*}
The equality is achieved if and only if $\mathbf{A}$ and $\mathbf{B}$ are linearly dependent and of rank at most one.
\end{thm}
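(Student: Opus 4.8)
The plan is to diagonalize both matrices and reduce the inequality to a term-by-term application of the Cauchy--Schwarz inequality. Since $\mathbf{A},\mathbf{B}\in\mathbb{H}^n_{+}$, the spectral theorem provides orthonormal eigenbases and nonnegative eigenvalues, so I would write $\mathbf{A}=\sum_{i}\lambda_i\mathbf{u}_i\mathbf{u}_i^H$ and $\mathbf{B}=\sum_{j}\mu_j\mathbf{v}_j\mathbf{v}_j^H$ with $\lambda_i,\mu_j\ge 0$ and $\{\mathbf{u}_i\},\{\mathbf{v}_j\}$ orthonormal. Then $\mathrm{tr}(\mathbf{A})=\sum_i\lambda_i$, $\mathrm{tr}(\mathbf{B})=\sum_j\mu_j$, and expanding the product gives
$$\mathrm{tr}(\mathbf{A}\mathbf{B})=\sum_{i,j}\lambda_i\mu_j\,|\mathbf{u}_i^H\mathbf{v}_j|^2.$$

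Next I would bound each cross term. Because the eigenvectors are unit vectors, Cauchy--Schwarz yields $|\mathbf{u}_i^H\mathbf{v}_j|^2\le\|\mathbf{u}_i\|^2\|\mathbf{v}_j\|^2=1$. Multiplying by the nonnegative weights $\lambda_i\mu_j$ and summing over all $(i,j)$ gives $\mathrm{tr}(\mathbf{A}\mathbf{B})\le\sum_{i,j}\lambda_i\mu_j=\mathrm{tr}(\mathbf{A})\mathrm{tr}(\mathbf{B})$, which is the claimed inequality. An alternative one-line route is to observe that $\mathrm{tr}(\mathbf{A})\mathbf{I}-\mathbf{A}\succeq 0$ (since $\mathrm{tr}(\mathbf{A})\ge\lambda_{\max}(\mathbf{A})$) and then invoke the fact that the trace of a product of two positive semidefinite matrices is nonnegative; however, the eigen-expansion is more convenient for the equality analysis, so I would keep it.

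The main work lies in the equality characterization. From the term-by-term bound, equality in the sum forces $\lambda_i\mu_j\bigl(1-|\mathbf{u}_i^H\mathbf{v}_j|^2\bigr)=0$ for every pair $(i,j)$; equivalently, whenever $\lambda_i>0$ and $\mu_j>0$ the Cauchy--Schwarz equality $|\mathbf{u}_i^H\mathbf{v}_j|=1$ holds, which for unit vectors means $\mathbf{u}_i$ and $\mathbf{v}_j$ are scalar multiples of one another. The hard part is passing from these pairwise alignments to the global rank-one conclusion: assuming both matrices are nonzero, fix $i_0,j_0$ with $\lambda_{i_0},\mu_{j_0}>0$. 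If $\mathbf{A}$ had a second positive eigenvalue with eigenvector orthogonal to $\mathbf{u}_{i_0}$, that eigenvector would also have to be parallel to $\mathbf{v}_{j_0}$, hence parallel to $\mathbf{u}_{i_0}$, contradicting orthogonality; thus $\mathbf{A}$ has a single positive eigenvalue and is of rank one, and symmetrically so is $\mathbf{B}$. Writing $\mathbf{A}=\lambda\mathbf{u}\mathbf{u}^H$ and $\mathbf{B}=\mu\mathbf{v}\mathbf{v}^H$ with $\mathbf{v}=e^{j\phi}\mathbf{u}$ then gives $\mathbf{B}=(\mu/\lambda)\mathbf{A}$, i.e. linear dependence with rank at most one; the converse is a direct substitution.

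I expect the only delicate bookkeeping to be the degenerate situations: a repeated positive eigenvalue (where the eigenbasis is not unique) and the vanishing of one matrix. The first I would handle by arguing at the level of eigenspaces rather than individual eigenvectors, since any vector in the positive-eigenvalue eigenspace must be parallel to $\mathbf{v}_{j_0}$, which already forces that eigenspace to be one-dimensional. The second (one matrix equal to $0$) makes the inequality an equality $0=0$ and is best treated separately, so I would state the equality clause under the standing assumption that both matrices are nonzero, which is the regime relevant to the subsequent reformulation.
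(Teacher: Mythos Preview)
Your argument is correct, but it follows a different route from the paper's. The paper applies Cauchy--Schwarz once at the \emph{matrix} level, using the Frobenius inner product $\langle\mathbf{A},\mathbf{B}\rangle=\mathrm{tr}(\mathbf{AB})$ on $\mathbb{H}^n$ to obtain $\mathrm{tr}(\mathbf{AB})\le\sqrt{\mathrm{tr}(\mathbf{A}^2)}\sqrt{\mathrm{tr}(\mathbf{B}^2)}=\sqrt{\sum_i\lambda_{i,a}^2}\sqrt{\sum_j\lambda_{j,b}^2}$, and then bounds each $\ell_2$-norm of the (nonnegative) eigenvalue vector by its $\ell_1$-norm to reach $\mathrm{tr}(\mathbf{A})\mathrm{tr}(\mathbf{B})$. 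The equality characterization then falls out in two clean pieces: tightness of the matrix Cauchy--Schwarz forces linear dependence of $\mathbf{A}$ and $\mathbf{B}$, and tightness of the $\ell_2\le\ell_1$ step forces each to have at most one nonzero eigenvalue. By contrast, you expand both spectral decompositions and apply the vector Cauchy--Schwarz term-by-term to the cross products $|\mathbf{u}_i^H\mathbf{v}_j|^2\le 1$. Your route avoids invoking the Frobenius inner product and is arguably more self-contained, and it makes the equality analysis very concrete (the alignment argument showing a second positive eigenvalue is impossible); the paper's route is shorter for the inequality and yields the two equality conditions more directly as separate by-products of its two inequality steps, without the eigenspace bookkeeping you flag. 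Both are perfectly valid; your treatment of the degenerate zero-matrix case is also appropriate, since a zero matrix is trivially linearly dependent with anything and has rank at most one, consistent with the stated equivalence.
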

\begin{proof}
The proof of Theorem \ref{thm:TwoMatrixIneq} is given in Appendix \ref{appendix:proof_of_thm_TwoMatrixIneq}.
\end{proof}
Based on Theorem \ref{thm:TwoMatrixIneq}, we can transform the problem \eqref{prob:MatrixRepresentation} into the following equivalent biconvex problem \cite{Demir2014, Huang2022}.
\begin{subequations}\label{prob:equivalent}
\begin{align}
& \underset{{\substack{\mathbf{X}_1\in\mathbb{H}^{N}_{+}, \mathbf{X}_2\in\mathbb{H}^{N}_{+}, t\in \mathbb{R}}}}{\mathrm{minimize}} && t \\
&\quad~\text{subject to} && \mathrm{tr}((\mathbf{N}_{N}^T)^\ell\mathbf{X}_1\mathbf{N}_{N}^\ell\mathbf{X}_2) \leq t, \forall \ell \in \mathbb{Z}_N\backslash \{0\}  \\
&&& \mathrm{tr}(\mathbf{F}(f_i)\mathbf{X}_1)\leq U_{\mathrm{max}}, \forall i\in\mathbb{Z}_{N_f} \\
&&& \mathrm{tr}(\mathbf{F}(f_i)\mathbf{X}_2)\leq U_{\mathrm{max}}, \forall i\in\mathbb{Z}_{N_f}\\
&&& \mathrm{tr}(\mathbf{E}_{n}^{(N)}\mathbf{X}_1) =1, \forall n\in \mathbb{Z}_{N} \label{prob:equivalent_e}\\
&&& \mathrm{tr}(\mathbf{E}_{n}^{(N)}\mathbf{X}_2) =1, \forall n\in \mathbb{Z}_{N} \label{prob:equivalent_f}\\
&&& \mathrm{tr}(\mathbf{X}_1\mathbf{X}_2)=\mathrm{tr}(\mathbf{X}_1)\mathrm{tr}(\mathbf{X}_2), \label{prob:equivalent_g}
\end{align}
where $f_i$'s are $N_f$ points uniformly chosen from $\mathcal{F}_{\mathrm{\text{stop}}}$. 
\end{subequations}
With this equivalent biconvex problem \eqref{prob:equivalent}, the exact penalty approach can be utilized to give an approximated solution as shown in the next section. 
\begin{rmk}\label{rmk:equivalence}
To see the equivalence between the problem \eqref{prob:MatrixRepresentation} and the problem \eqref{prob:equivalent}, first notice that the constraints \eqref{prob:equivalent_e}, \eqref{prob:equivalent_f} and \eqref{prob:equivalent_g} imply $\mathbf{X}_1=\mathbf{X}_2$ and $\mathrm{rank}(\mathbf{X}_1) = \mathrm{rank}(\mathbf{X}_2) = 1$ according to Theorem \ref{thm:TwoMatrixIneq}. Hence, since $\mathbf{X}_1$ and $\mathbf{X}_2$ are both positive semidefinite, we can decompose them as $\mathbf{X}_1=\mathbf{X}_2=\mathbf{x}\mathbf{x}^H$ for some $\mathbf{x} \in \mathbb{C}^N$. Then, the constraints in the problem \eqref{prob:equivalent} can be reduced to those in the problem \eqref{prob:MatrixRepresentation}.
\end{rmk}

\subsection{Exact Penalty Approach (SDR)}\label{sec:optimization_algorithm}
The non-convex constraints in the problem \eqref{prob:equivalent} hinder us from developing an efficient algorithm for the optimization problem. However, the problem \eqref{prob:equivalent} is a biconvex optimization problem which can be addressed by alternating minimization \cite{Demir2014}. To relax the constraint \eqref{prob:equivalent_g}, which makes the problem have no space for alternating minimization, we first introduce the nonnegative penalty function \cite{Demir2014, Huang2022}
\begin{align}
    \mathrm{tr}(\mathbf{X}_1)\mathrm{tr}(\mathbf{X}_2)-\mathrm{tr}(\mathbf{X}_1\mathbf{X}_2)\label{original_penalty_function}
\end{align}
whose value is zero if and only if the constraint \eqref{prob:equivalent_g} holds. Note that the constraints \eqref{prob:equivalent_e} and \eqref{prob:equivalent_f} imply that $\mathrm{tr}(\mathbf{X}_1) = \mathrm{tr}(\mathbf{X}_2) = N$, and thus \eqref{original_penalty_function} can be simplified as
\begin{align}
    N^2-\mathrm{tr}(\mathbf{X}_1\mathbf{X}_2).\label{simplified_penalty_function}
\end{align}
Then, with the help of the penalty function, we can apply alternating minimization to break the problem \eqref{prob:equivalent} into two subproblems as follows.
\begin{subequations}\label{prob:AM_first}
\begin{align}
& \underset{\mathbf{X}_1\in\mathbb{H}^{N}_{+}, t\in \mathbb{R}}{\mathrm{minimize}} && (1-w)t + w\big[N^2-\mathrm{tr}(\mathbf{X}_1\mathbf{X}_2)\big]^2\label{prob:AM_first_a}\\
&\text{subject to} && \mathrm{tr}((\mathbf{N}_{N}^T)^\ell\mathbf{X}_1\mathbf{N}_{N}^\ell\mathbf{X}_2) \leq t, \forall \ell \in \mathbb{Z}_N\backslash \{0\}  \\
&&& \mathrm{tr}(\mathbf{F}(f_i)\mathbf{X}_1)\leq U_{\mathrm{max}}, \forall i \in\mathbb{Z}_{N_f}\\
&&& \mathrm{tr}(\mathbf{E}_{n}^{(N)}\mathbf{X}_1) =1, \forall n\in \mathbb{Z}_{N},
\end{align} 
\end{subequations}
where $f_i$'s are $N_f$ points uniformly chosen from $\mathcal{F}_{\mathrm{\text{stop}}}$, and $\mathbf{X}_2$ is a constant matrix. 
\begin{subequations}\label{prob:AM_second}
\begin{align}
& \underset{\mathbf{X}_2\in\mathbb{H}^{N}_{+}, t\in \mathbb{R}}{\mathrm{minimize}} && (1-w)t - w\cdot\mathrm{tr}(\mathbf{X}_1\mathbf{X}_2)\hspace{2.4cm}\label{prob:AM_second_a}\\
&\text{subject to} && \mathrm{tr}((\mathbf{N}_{N}^T)^\ell\mathbf{X}_1\mathbf{N}_{N}^\ell\mathbf{X}_2) \leq t, \forall \ell \in \mathbb{Z}_N\backslash \{0\}  \\
&&& \mathrm{tr}(\mathbf{F}(f_i)\mathbf{X}_2)\leq U_{\mathrm{max}}, \forall i\in\mathbb{Z}_{N_f} \\
&&& \mathrm{tr}(\mathbf{E}_{n}^{(N)}\mathbf{X}_2) =1, \forall n\in \mathbb{Z}_{N},
\end{align}
\end{subequations}
where $f_i$'s are $N_f$ points uniformly chosen from $\mathcal{F}_{\mathrm{\text{stop}}}$, and $\mathbf{X}_1$ is a constant matrix.
In the two subproblems \eqref{prob:AM_first}, \eqref{prob:AM_second}, $w \in [0,1]$ is a constant that controls the relative weights on the penalty functions, which makes the two subproblems tend to meet the constraint \eqref{prob:equivalent_g}. Note that the penalty functions in both subproblems are slightly different from \eqref{simplified_penalty_function}. We will give the reasons later in this section. 

First, because $N^2-\mathrm{tr}(\mathbf{X}_1\mathbf{X}_2)$ is affine in $\mathbf{X}_1$, and the function $f(x)=x^2$ is convex , we know that the function $\big[N^2-\mathrm{tr}(\mathbf{X}_1\mathbf{X}_2)\big]^2$ is convex in $\mathbf{X}_1$. Therefore, the problem \eqref{prob:AM_first} is a convex optimization problem.
Since both the problem \eqref{prob:AM_first} and the problem \eqref{prob:AM_second} are convex optimization problems, they can be solved efficiently via CVX, a package for specifying and solving convex programs \cite{cvx}. The main idea of our algorithm is to alternately solve the problem \eqref{prob:AM_first} and the problem \eqref{prob:AM_second} in pursuit of an approximately optimal solution to the non-convex problem \eqref{prob:equivalent}. Our algorithm is given in Algorithm \ref{algo:AM}. Notice that, in Algorithm \ref{algo:AM},  we alternatively solve the two subproblems, whose differences are shown in their penalty functions. The penalty function in \eqref{prob:AM_first_a} is the square of \eqref{simplified_penalty_function}, which can make the algorithm converge faster to a rank-one solution than simply applying \eqref{simplified_penalty_function} as a penalty function. The penalty function in \eqref{prob:AM_second_a} is directly from \eqref{simplified_penalty_function} with constant $N^2$ omitted, which makes Algorithm \ref{algo:AM} focus more on finding a good solution with small PSL in step \ref{step:solve_AM_second} than in step \ref{step:solve_AM_first}. With this algorithm, which alternately focuses on making the solution into a rank-one solution and finding a good solution with small PSL, we found it converge within an acceptable time to a satisfactory solution\footnote{The parameters $\varepsilon_{\mathrm{rank}}$ should be small enough to prevent the projection in step \ref{step:proj_to_vec} from resulting in an erroneous solution which may not satisfy the constraints in the problem \eqref{prob:AM_second}.}, which is demonstrated in Section \ref{sec:NumericalValidation}  and evaluated in Section \ref{sec:PerformanceAnalysis}.

\algnewcommand{\Initialization}[1]{%
  \State \textbf{Initialization}
}
\algnewcommand{\Repeatu}[1]{%
  \State \textbf{Repeat}
}
\algnewcommand{\until}[1]{%
  \State \textbf{until}
}
\begin{algorithm}
\caption{AM Method for PSL Minimizing Problem with Maximal Spectral and Unimodular Constraints}
\algrenewcommand\algorithmicprocedure{\textbf{Initialization}}
\textbf{Input} $N, N_f, w, U_{\mathrm{max}}, \varepsilon_{x}, \varepsilon_{\mathrm{rank}}, \varphi_{\mathrm{max}}, \mathcal{F}_{\mathrm{\text{stop}}}, \mathbf{x}_{\mathrm{init}}$\\
\textbf{Output} The vector $\mathbf{x}$ of the optimal pulse
\begin{algorithmic}[1]
\Initialization
\State \quad Let $\mathbf{X}_2^{(0)}=\mathbf{x}_{\mathrm{init}}\mathbf{x}_{\mathrm{init}}^H,~\varphi = 0$.
\State \quad Uniformly choose $N_f$ points $f_i$'s from $\mathcal{F}_{\text{stop}}$ and calculate the corresponding $\{\mathbf{F}(f_i)\}_{i\in\mathbb{Z}_{N_f}}$.
\State \quad Compute the corresponding $\mathbf{N}_{N}, \{\mathbf{E}^{(N)}_n\}_{n\in\mathbb{Z}_N}$ with given $N$.
\Repeatu

\State $\quad$ Solve the problem \eqref{prob:AM_first} for $\mathbf{X}_1^{(\varphi+1)}$ by CVX while fixing $\mathbf{X}_2$ as $\mathbf{X}_2^{(\varphi)}$.\label{step:solve_AM_first}
\State $\quad$ Solve the problem \eqref{prob:AM_second} for $\mathbf{X}_2^{(\varphi+1)}$ by CVX while fixing $\mathbf{X}_1$ as $\mathbf{X}_1^{(\varphi+1)}$.\label{step:solve_AM_second}
\State  $\quad$$~\varphi\leftarrow \varphi + 1$

\until \State $\frac{\mathrm{tr}(\mathbf{X}_1)\mathrm{tr}(\mathbf{X}_2)-\mathrm{tr}(\mathbf{X}_1\mathbf{X}_2)}{\mathrm{tr}(\mathbf{X}_1){\mathrm{tr}}(\mathbf{X}_2)}=1-\frac{\mathrm{tr}(\mathbf{X}_1\mathbf{X}_2)}{N^2}\leq \varepsilon_x$ or $\varphi\geq\varphi_{\mathrm{max}}.$

\State Perform the singular value decomposition (SVD) on $\mathbf{X}_2^{(\varphi )}$: $\mathbf{X}_2^{(\varphi )} = \mathbf{U}\Sigma \mathbf{V}^H$
    
\State Set $\sigma_0=
    \begin{bmatrix}
    \Sigma
    \end{bmatrix}_{0,0}$ and $\sigma_1=
    \begin{bmatrix}
    \Sigma
    \end{bmatrix}_{1,1}$.
    
\If {$\sigma_1/\sigma_0\leq \varepsilon_{\mathrm{rank}}$}
    \State Obtain $\mathbf{x}=\sqrt{\sigma_0}\cdot\mathbf{U}\mathbf{e}_0^{N}$.\label{step:proj_to_vec}
\Else
\State Declare failure of convergence.
    
\EndIf
\end{algorithmic}\label{algo:AM}
\end{algorithm}


\section{Numerical Validation}\label{sec:NumericalValidation}
\subsection{Definitions of Parameters}
Since the PSL defined in \eqref{eq:psl} varies along with the sequence length $N$, it is desirable to define the normalized PSL by the ratio of the PSL and the total power of the sequence \cite{he_li_stoica_2012, Cui2018, He2010, Mohammad2021}, which is shown as follows. 

\begin{defn}\label{def:NormalizedPSL}
Given a sequence $\mathbf{x} \in \mathbb{C}^N$, the normalized peak sidelobe level (NPSL) of $\mathbf{x}$ (in decibel) is defined as 
\begin{equation*}
\begin{aligned}
\underset{\ell\in \mathbb{Z}_N\backslash\{0\}}{\mathrm{max}} 20~\mathrm{log}_{10}\left(\left\vert\frac{r_\ell}{r_0}\right\vert\right)=\underset{\ell\in \mathbb{Z}_N\backslash\{0\}}{\mathrm{max}} 20~\mathrm{log}_{10}\left(\frac{\vert\mathbf{x}^H(\mathbf{N}_{N}^\ell)^T\mathbf{x}\vert}{N}\right).
\end{aligned}\end{equation*}
\end{defn}

To demonstrate the quality of any sequence in spectral compatibility, we need to define the average passband energy and the maximal stopband energy as follows.
\begin{defn}\label{def:APB}
The average passband energy is defined as
\begin{align*}
E_{\text{APB}} = \frac{1}{B_{\mathrm{pass}}}\int_{\mathcal{F}_{\mathrm{pass}}}\vert X(f) \vert^2 \, df,
\end{align*}
where $\mathcal{F}_{\mathrm{pass}}=[0,1]\backslash \mathcal{F}_{\mathrm{stop}}\subseteq [0,1]$ is the passband region and $B_{pass}=\int_{F_{pass}}df$ is the passband bandwidth.
\end{defn}
\begin{defn}\label{def:MSB}
The maximal stopband energy is defined as
\begin{align*}
E_{\text{MSB}} = \underset{{f\in \mathcal{F}_{\mathrm{stop}}}}{\mathrm{max}}\vert X(f) \vert^2.
\end{align*}
\end{defn}
With the $E_{\text{APB}}$ and $E_{\text{MSB}}$ defined above, we can define the stopband attenuation as follows \cite{he_li_stoica_2012,He2010}.
\begin{defn}\label{def:StopbandAttenuation}
The stopband attenuation $A_{\text{stop}}$ (in decibel) is defined as
\begin{equation*}
\begin{aligned}
A_{\text{stop}} = 10\cdot\mathrm{log_{10}}\left(\frac{E_{\text{APB}}}{E_{\text{MSB}}}\right),
\end{aligned}
\end{equation*}
where $E_{\text{APB}}$ is the average passband energy and $E_{\text{MSB}}$ is the maximal stopband energy.
\end{defn}

Note that, practically, we can only approximate the value of $A_{\mathrm{stop}}$ by uniformly calculating a finite number of samples in the frequency response $X(f)$ to approximate $E_{\mathrm{APB}}$ and $E_{\mathrm{MSB}}$.
Specifically, we can calculate the mean of the energy spectral density of the samples within $\mathcal{F}_{\mathrm{pass}}$ to approximate $E_{\mathrm{APB}}$. For $E_{\mathrm{APB}}$, it can be approximated by finding the maximal energy spectral density of the samples within $\mathcal{F}_{\mathrm{stop}}$.

\subsection{Choice of $\boldsymbol{U_{\mathrm{max}}}$}
With appropriate $U_{\mathrm{max}}$, we can achieve the desired stopband attenuation. 
Their relations are shown as follows:
\subsubsection{Choice such that $A_{\mathrm{stop}}\geq A$}
We can set $$U_{\mathrm{max}}=\frac{N}{10^{0.1\times A}}$$ to achieve $A_{\mathrm{stop}} \geq A.$ The reason arises from the fact that 
\begin{align}
    \frac{E_{\text{APB}}}{E_{\text{MSB}}} > \frac{N}{E_{\text{MSB}}} \geq \frac{N}{U_{\mathrm{max}}}. \label{Umax1_ineq}
\end{align} Thus, we have $A_{\mathrm{stop}}\geq  10 ~\mathrm{log_{10}}\frac{N}{U_{\mathrm{max}}}=A$. The first inequality in \eqref{Umax1_ineq} is due to the fact that the average passband energy $E_{\text{APB}}$ is larger than the overall average energy, which is 
$\int_{0}^1\vert X(f)\vert ^2 df = \sum_{n=0}^{N-1}\vert x_n \vert^2 = N$ because of the Parsevel's relation. 

\subsubsection{Choice such that $A_{\mathrm{stop}}\approx A$} 
We can set $$U_{\mathrm{max}}=\frac{N}{10^{0.1\times A}}\times \frac{1}{B_{\mathrm{pass}}}$$ to achieve $A_{\mathrm{stop}}\approx A.$ The reason comes from the fact that $E_{\text{APB}} =\frac{N -\epsilon}{B_{\text{pass}}}$, where $N$ is the overall energy, and $\epsilon$ is the total energy of the stopband. Therefore, we have $A_{\text{stop}}= 10~\mathrm{log_{10}}\frac{E_{\text{APB}}}{E_{\text{MSB}}}
 \approx  10~\mathrm{log_{10}}\frac{ N -\epsilon}{(U_{\text{max}} \times B_{\text{pass}})} 
 \approx A.$
Here, we assume that the maximal stopband energy $E_{\text{MSB}}$ is approximately equal to $U_{\text{max}}$, and $\epsilon$ is sufficiently small. The accuracy of this approximation depends on the total energy of stopband.
\subsection{Numerical Examples}\label{Sec:Numerical_Examples}
Numerical examples were carried out to illustrate the advantages of the proposed Algorithm \ref{algo:AM} (labeled as ``Proposed Algorithm'') over the SCAN algorithm proposed in \cite{He2010} (labeled as ``SCAN'') and the MM-PMM algorithm proposed in \cite{LU2022}  (labeled as ``MM-PMM'') when it comes to the normalized peak side-lobe level and stopband attenuation. 
Unless otherwise specified, we use the SCAN-generated sequence as the initial point $\mathbf{x}_{\mathrm{init}}$ and take $500$ samples to plot the normalized energy spectrum in such a way that $E_{\mathrm{APB}}$ is equal to 1 \cite{he_li_stoica_2012,He2010}. 
All numerical examples in this paper were conducted by MATLAB 2020a on a personal computer with AMD Ryzen 9 3900X 12-core processor and 32 GB RAM.
The five cases are presented below.
\subsubsection{Case 1}\label{sec:example_case1}
First, we designed a unimodular sequence with $N=32$ and $\mathcal{F}_{\text{stop}}=[0.2,0.3]$.
We run the SCAN algorithm with the parameters $\tilde{N}=10N,$ $\lambda=0.97$ and the MM-PMM algorithm with the parameters $\gamma_r=20,\gamma_y=20,\rho_r=0.1,\rho_y=5000,L_{\mathrm{max}}=430,\eta_r=0.0001,\eta_y=0.0001,p=20$ to generate two sequences. 
Next, we applied Algorithm $\ref{algo:AM}$ to generate the sequence with the parameters $w=0.11,$ $U_{\mathrm{max}}=0.032,$ $\varepsilon_{x}=2\times 10^{-3},$ $\varepsilon_{\mathrm{rank}}=10^{-8},$  $\varphi_{\mathrm{max}}=50,$ $\mathcal{F}_{\text{stop}}=[0.2,0.3],$ $N_{f} = 30.$ 
Here, $f_i$ is chosen as $f_i=0.2+\frac{0.1}{30}\times i$ for $i\in \mathbb{Z}_{30}.$ 
The normalized auto-correlations and energy spectra of these sequences are shown in Fig. \ref{figN32F2030}, and their normalized peak sidelobe levels and stopband attenuation are summarized in Table \ref{tableN32F2030}. 
We can see from Table \ref{tableN32F2030} that our proposed method achieves a better NPSL ($-18.18$ dB) than that of the MM-PMM algorithm ($-14.67$ dB) and that of the SCAN algorithm ($-16.52$ dB) while enjoying controllable stopband attenuation.
\begin{table}[hbtp!]
  \centering
  \caption{NPSL and $A_{\mathrm{stop}}$ for different methods.}
    \label{tableN32F2030}
  \begin{threeparttable}
    \begin{tabular}{c c c}
      \toprule
      \textbf{Method}           & \textbf{NPSL} (dB) & \textbf{Stopband  Attenuation} (dB) \\
      \midrule
      SCAN          & -16.52  & 17.92 \\
      MM-PMM &-14.67  &15.41\\
      Proposed Algorithm &   -18.18   & 30.24\\
      \bottomrule
    \end{tabular}
  \end{threeparttable}
    
\end{table}
\begin{figure}[hbtp!]
    \centering
    \includegraphics[width=0.48\textwidth]{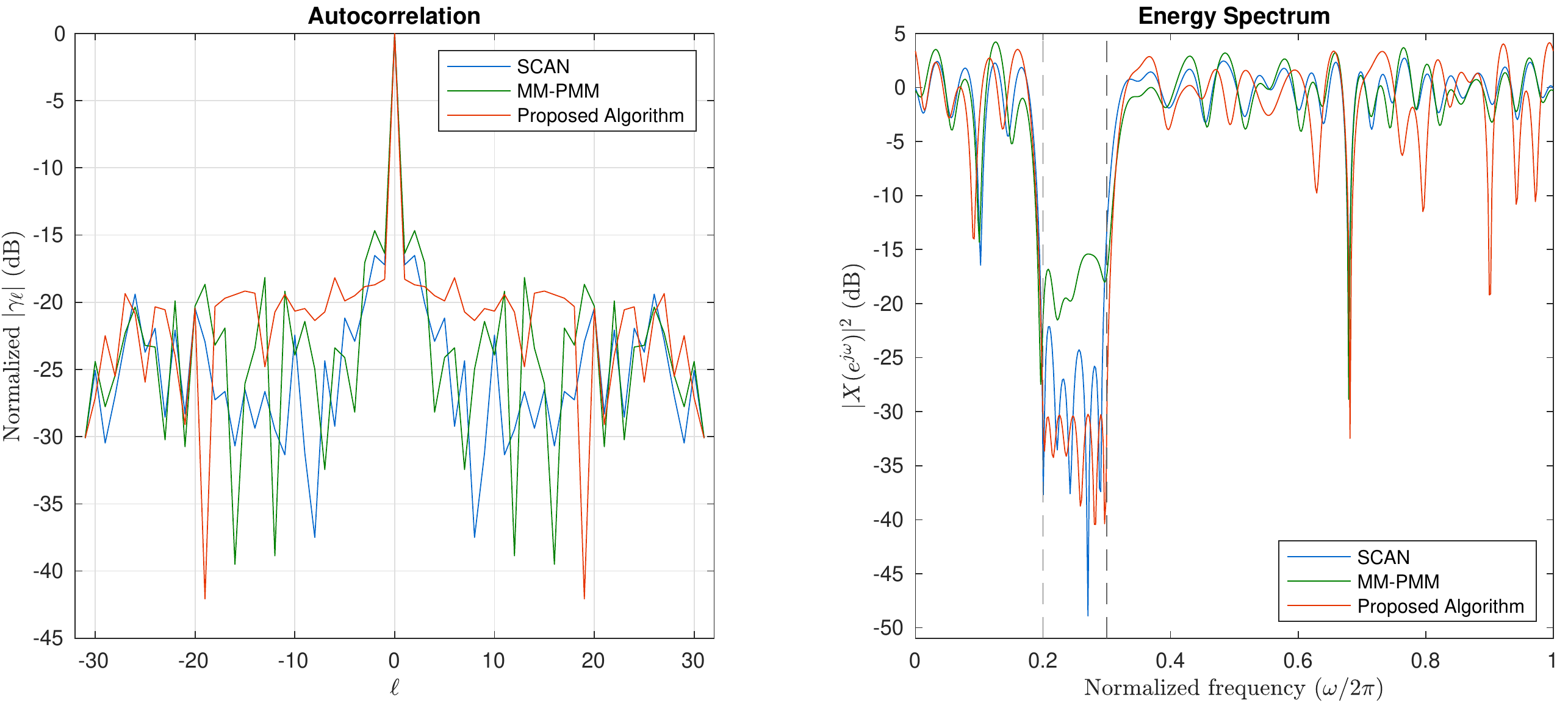}
    \caption{The auto-correlations and energy spectra comparison of sequences in Case 1 ($N=32$)}
    \label{figN32F2030}
\end{figure}

\subsubsection{Case 2}\label{sec:case2}
Second, we designed a unimodular sequence with larger $N=100$ and $\mathcal{F}_{\text{stop}}=[0.2,0.3]$. 
The SCAN algorithm was used with the parameters  $\tilde{N}=10N,$ $\lambda=0.97$ to generate a sequence. 
Next, the MM-PMM algorithm was run with the parameters $\gamma_r=10,\gamma_y=10,\rho_r=0.1,\rho_y=50000,L_{\mathrm{max}}=265,\eta_r=0.0001,\eta_y=0.0001,p=20$ to generate another sequence.
Then, Algorithm \ref{algo:AM} was applied with the parameters $w=0.11,$ $U_{\mathrm{max}}=0.1,$ $\varepsilon_{x}=2\times 10^{-3},$ $\varepsilon_{\mathrm{rank}}=10^{-8},$ $\varphi_{\mathrm{max}}=50,$ $\mathcal{F}_{\text{stop}}=[0.2,0.3],$ $N_{f} = 60$ to obtain the sequence. 
Here, $f_i$ is chosen as $f_i=0.2+\frac{0.1}{60}\times i$ for $i\in \mathbb{Z}_{60}.$
The normalized auto-correlations and energy spectra of these sequences are plotted in Fig. \ref{figN100F2030}, and their normalized peak side-lobe levels and stopband attenuation are summarized in Table \ref{tableN100F2030}. It can be seen from Table \ref{tableN100F2030} that our method has the advantage of NPSL ($-21.16$ dB) over the MM-PMM algorithm ($-16.20$ dB) and the SCAN algorithm ($-17.91$ dB) while restricting $X(f)$ to lower than $-30$ dB over $\mathcal{F}_{\text{stop}}$.
\begin{table}[hbtp!]
  \centering
  \begin{threeparttable}
    \caption{NPSL and $A_{\mathrm{stop}}$ for different methods.}
    \label{tableN100F2030}
    \begin{tabular}{c c c}
      \toprule
      \textbf{Method}           & \textbf{NPSL} (dB) & \textbf{Stopband  Attenuation} (dB) \\
      \midrule
      SCAN          & -17.92  & 22.50 \\
      MM-PMM & -16.20 &7.07\\
      Proposed Algorithm &  -21.16   & 30.14\\
      \bottomrule
    \end{tabular}
  \end{threeparttable}
\end{table}
\begin{figure}[hbtp!]
    \centering
    \includegraphics[width=0.48\textwidth]{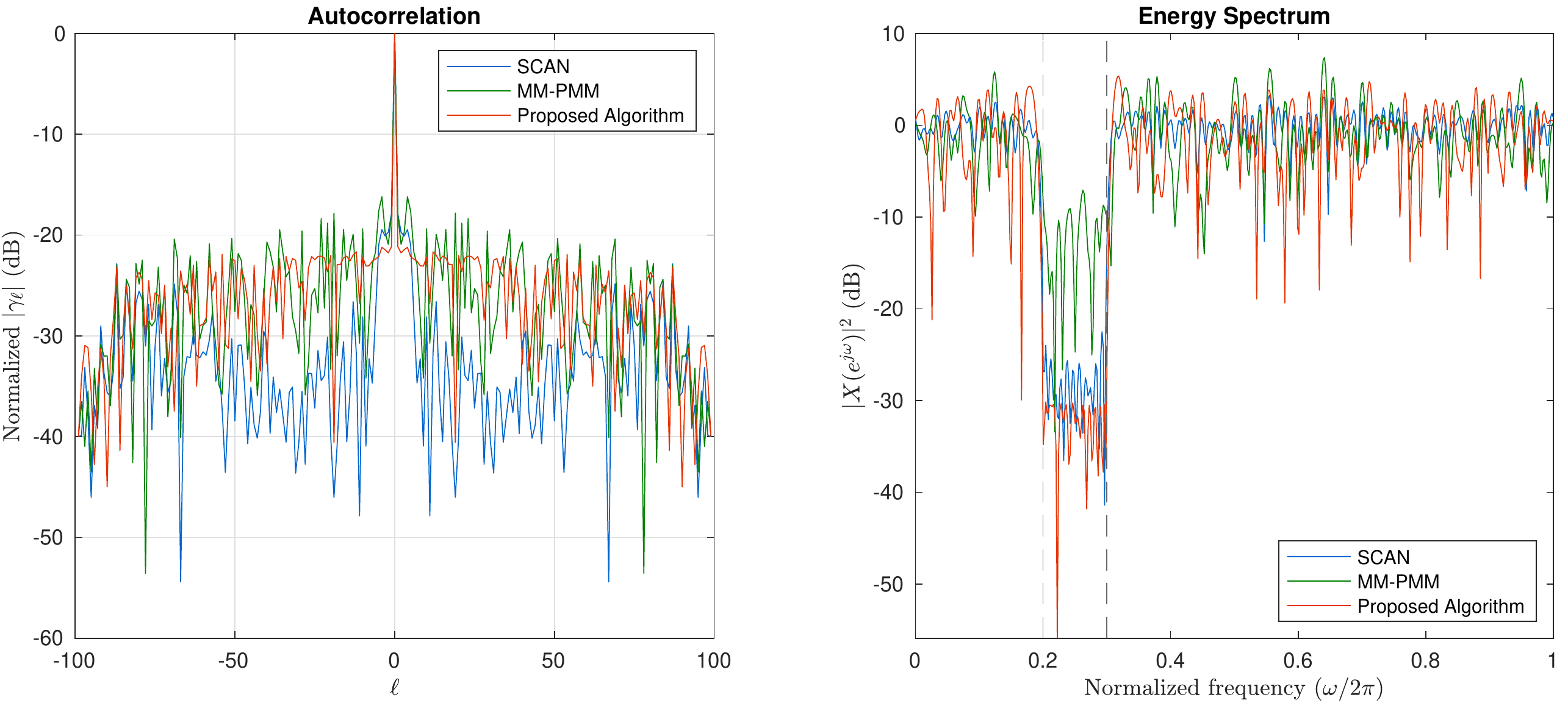}
    \caption{The auto-correlations and energy spectra comparison of sequences in Case 2 ($N=100$)}
    \label{figN100F2030}
\end{figure}

\subsubsection{Case 3}
Third, we designed a sequence for the notch with $N=100$ and $\mathcal{F}_{\text{stop}}=[0.6,0.62].$ 
The SCAN algorithm was applied with the parameters $\tilde{N}=10N,$ $\lambda=0.97$ to generate a sequence. 
Next, the MM-PMM algorithm was run with the parameters $\gamma_r=10,\gamma_y=10,\rho_r=0.1,\rho_y=50000,L_{\mathrm{max}}=265,\eta_r=0.0001,\eta_y=0.0001,p=20$ to generate another sequence.
Then, we run Algorithm \ref{algo:AM} to obtain the sequence with the parameters $w=0.18,$ $U_{\mathrm{max}}=0.001,$ $\varepsilon_{x}=10^{-4},$ $\varepsilon_{\mathrm{rank}}=10^{-8},$ $\varphi_{\mathrm{max}}=50,$ $\mathcal{F}_{\text{stop}}=[0.6,0.62],$ $N_{f} = 50$. 
Here, $f_i$ is chosen as $f_i=0.6+\frac{0.02}{50}\times i$ for $i\in \mathbb{Z}_{50}.$ 
The normalized auto-correlations and energy spectra of these  sequences are illustrated in Fig. \ref{figN100F6062}, and their normalized peak sidelobe levels and stopband attenuation are summarized in Table \ref{tableN100F6062}. We can see from Table \ref{tableN100F6062} that our proposed method exceeds the MM-PMM algorithm and the SCAN algorithm in both NPSL and stopband attenuation.

\begin{table}[hbtp!]
  \centering
  \begin{threeparttable}
    \caption{NPSL and $A_{\mathrm{stop}}$ for different methods.}
    \label{tableN100F6062}
    \begin{tabular}{c c c}
      \toprule
      
      \textbf{Method}           & \textbf{NPSL} (dB) & \textbf{Stopband  Attenuation} (dB) \\
      \midrule
      SCAN          & -25.08  & 25.43 \\
      MM-PMM  &-20.86 & 5.95\\
      Proposed Algorithm &  -26.88   & 50.05\\
      \bottomrule
    \end{tabular}
  \end{threeparttable}
\end{table}
\begin{figure}[hbtp!]
    \centering
    \includegraphics[width=0.48
    \textwidth]{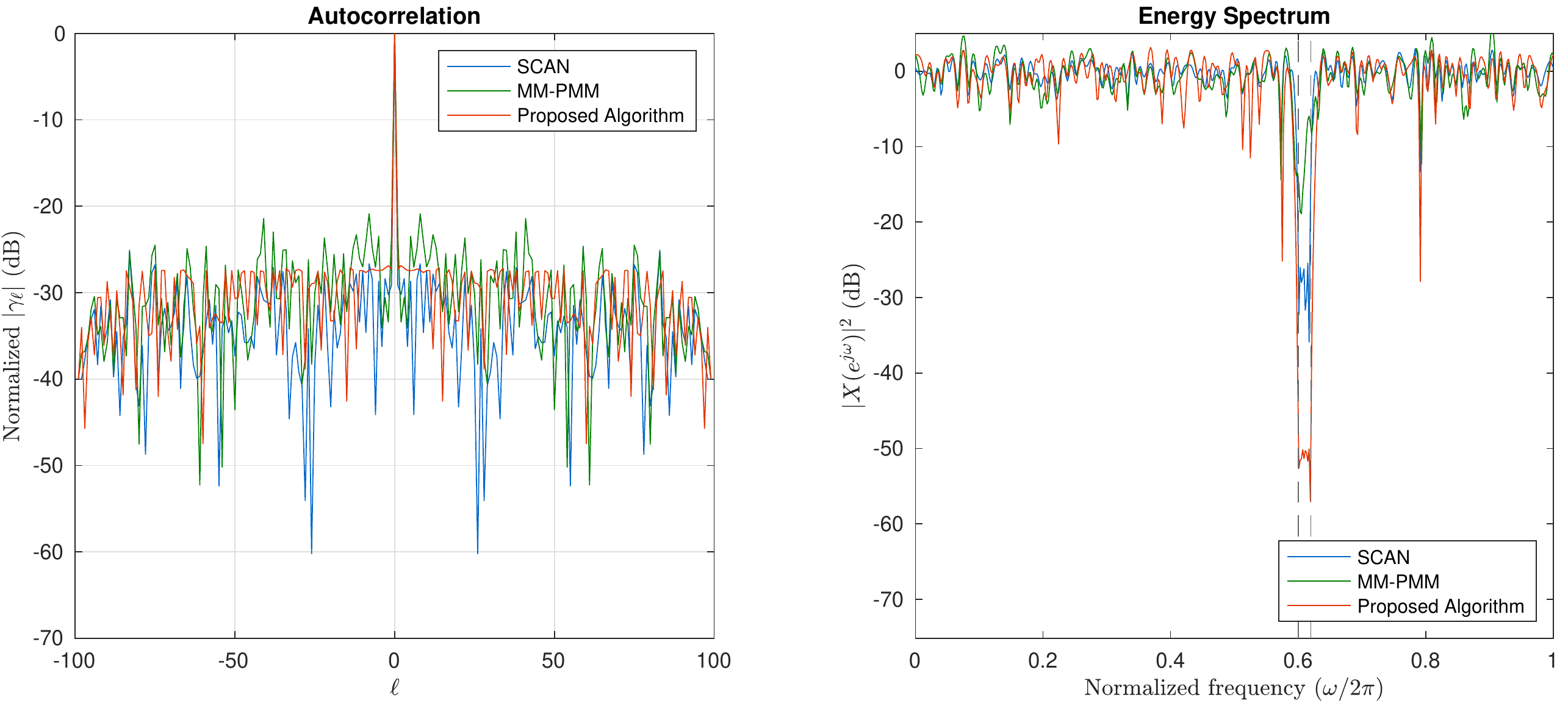}
    \caption{The auto-correlations and energy spectra comparison of sequences in Case 3 ($N=100$)}
    \label{figN100F6062}
\end{figure}

\subsubsection{Case 4}
We also designed a sequence for the notch with $N=128$ and $\mathcal{F}_{\text{stop}}=[0.6,0.62].$ 
The SCAN algorithm with the parameters $\tilde{N}=10N,$ $\lambda=0.97$ was run to generate a sequence. 
Next, the MM-PMM algorithm was applied with the parameters $\gamma_r=10,\gamma_y=10,\rho_r=0.1,\rho_y=50000,L_{\mathrm{max}}=255,\eta_r=0.0001,\eta_y=0.0001,p=20$ to generate another sequence.
Then, we applied Algorithm \ref{algo:AM} with the parameters $w=0.11,$ $U_{\mathrm{max}}=0.00128,$ $\varepsilon_{x}=10^{-4},$ $\varepsilon_{\mathrm{rank}}=10^{-8},$ $\varphi_{\mathrm{max}}=50,$ $\mathcal{F}_{\text{stop}}=[0.6,0.62],$ $N_{f} = 64$ to obtain the sequence. 
Here, $f_i$ is chosen as $f_i=0.6+\frac{0.02}{64}\times i$ for $i\in \mathbb{Z}_{64}.$ 
The normalized auto-correlations and energy spectra of these sequences are plotted in Fig. \ref{figN128F6062}, and their normalized peak sidelobe levels and stopband attenuation are summarized in Table \ref{tableN128F6062}. 
We can see from Table \ref{tableN128F6062} that our proposed method outperforms the MM-PMM algorithm and the SCAN algorithm in terms of NPSL and stopband attenuation.
\begin{table}[hbtp!]
  \centering
  \begin{threeparttable}
    \caption{NPSL and $A_{\mathrm{stop}}$ for different methods.}
    \label{tableN128F6062}
    \begin{tabular}{c c c}
      \toprule
      \textbf{Method}           & \textbf{NPSL} (dB) & \textbf{{}Stopband  Attenuation} (dB) \\
      \midrule
      SCAN          & -25.56   & 20.84 \\
      MM-PMM  & -24.58 &11.28\\
      Proposed Algorithm &  -28.69   & 50.70 \\
      \bottomrule
    \end{tabular}
  \end{threeparttable}
\end{table}
\begin{figure}[hbtp!]
    \centering
    \includegraphics[width=0.48\textwidth]{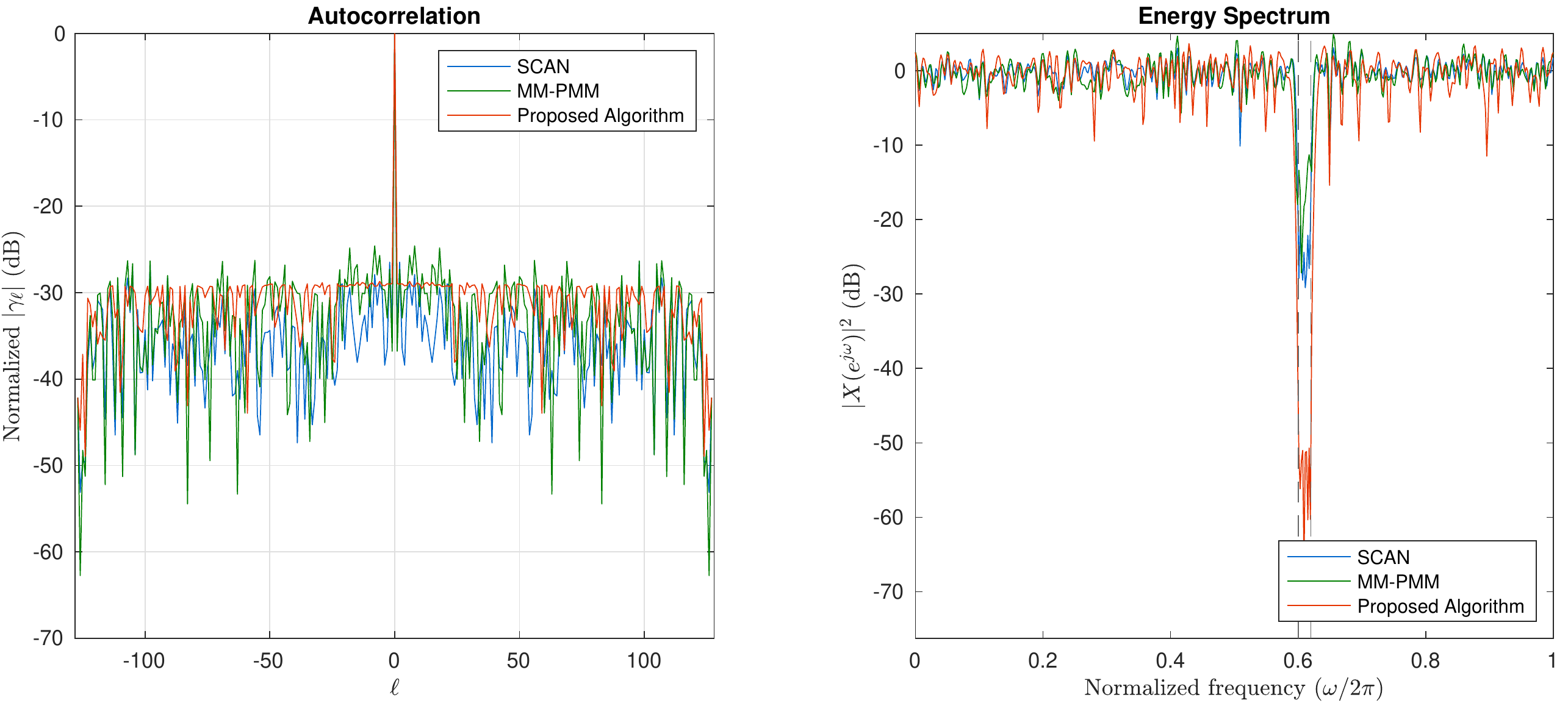}
    \caption{The auto-correlations and energy spectra comparison of sequences in Case 4 ($N=128$)}
    \label{figN128F6062}
\end{figure}
\subsubsection{Case 5}
Finally, we designed a long sequence with $N=256$ and $\mathcal{F}_{\text{stop}}=[0.2,0.3]$ to illustrate the applicability of our method for a long sequence. 
The SCAN algorithm with the parameters $\tilde{N}=10N,$ $\lambda=0.97$ was used to generate the initial vector. 
Next, the MM-PMM algorithm was run with the parameters $\gamma_r=10,\gamma_y=10,\rho_r=0.1,\rho_y=50000,L_{\mathrm{max}}=265,\eta_r=0.0001,\eta_y=0.0001,p=20$ to generate another sequence.
Then, we applied Algorithm \ref{algo:AM} with the parameters $w=0.11,$ $U_{\mathrm{max}}=0.256,$ $\varepsilon_{x}=10^{-3},$ $\varepsilon_{\mathrm{rank}}=10^{-8},$ $\varphi_{\mathrm{max}}=50,$ $\mathcal{F}_{\text{stop}}=[0.2,0.3],$ $N_{f} =150 $ to obtain the desired sequence. 
Here, $f_i$ is chosen as $f_i=0.2+\frac{0.1}{150}\times i$ for $i\in \mathbb{Z}_{150}.$ 
The normalized auto-correlations and energy spectra of these sequences are plotted in Fig. \ref{figN256F2030}, and their normalized peak sidelobe levels and stopband attenuation are summarized in Table \ref{tableN256F2030}. 
We can see from Table \ref{tableN256F2030} that our proposed method outperforms the MM-PMM algorithm and the SCAN algorithm in terms of NPSLs and stopband attenuation.

\begin{table}[hbtp!]
  \centering
  \begin{threeparttable}
    \caption{NPSL and $A_{\mathrm{stop}}$ for different methods.}
    \label{tableN256F2030}
    \begin{tabular}{c c c}
      \toprule
      \textbf{Method}           & \textbf{NPSL} (dB) & \textbf{{}Stopband  Attenuation} (dB) \\
      \midrule
      SCAN          & -18.72   & 26.52 \\
      MM-PMM  & -18.49& 15.93 \\
      Proposed Algorithm &  -22.40   & 30.30 \\
      \bottomrule
    \end{tabular}
  \end{threeparttable}
\end{table}
\begin{figure}[hbtp!]
    \centering
    \includegraphics[width=0.48\textwidth]{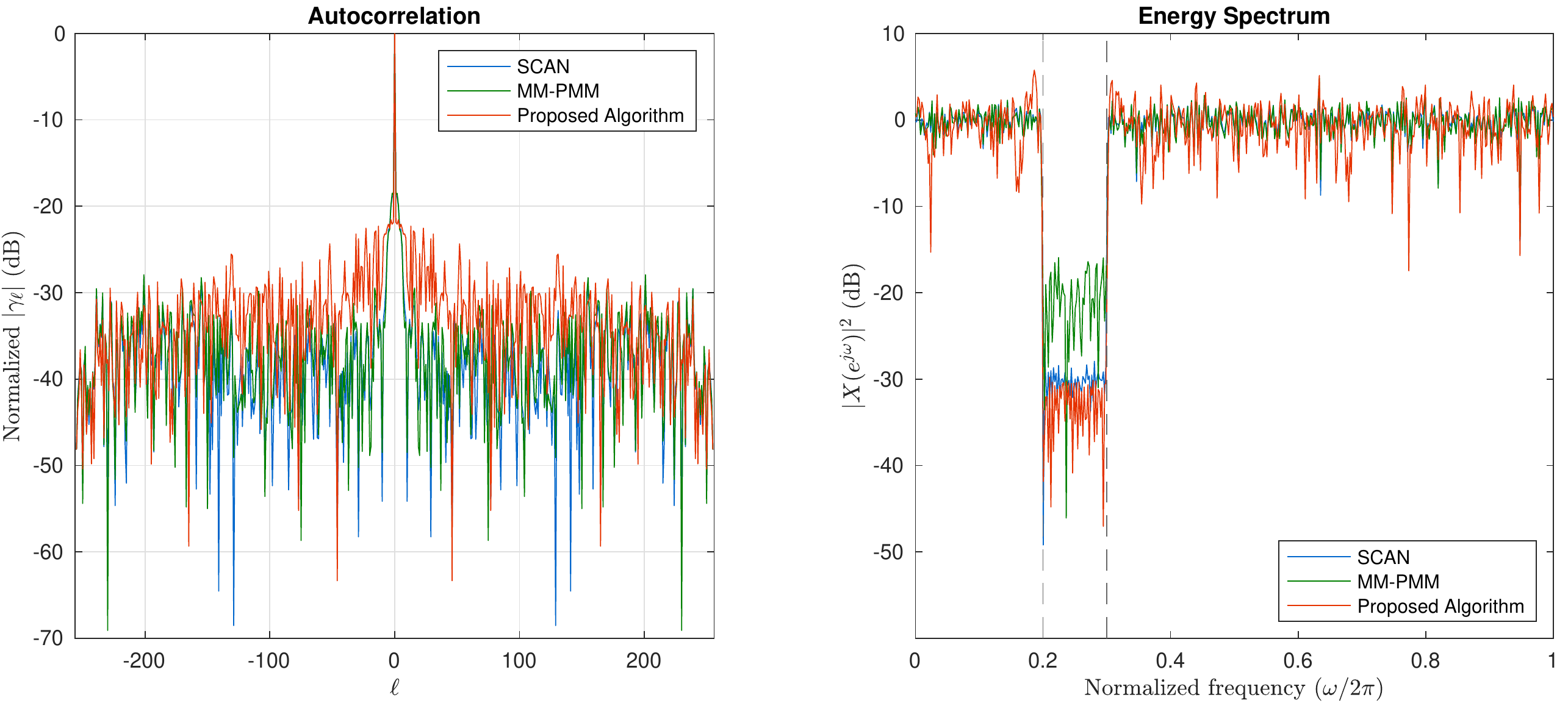}
    \caption{The auto-correlations and energy spectra comparison of sequences in Case 5 ($N=256$)}
    \label{figN256F2030}
\end{figure}
From these numerical results, we can find that the proposed algorithm is superior to the SCAN and the MM-PMM algorithm in terms of NPSL and stopband attenuation.
As for spectral compatibility, this satisfactory result firstly arises from our spectral constraints designed to limit the maximal stopband energy, while the SCAN algorithm only controls the integrated stopband energy. Secondly, the resolution of our spectral constraints is adjustable and can be large even for short sequences. In contrast, the spectral resolution of the other two algorithms is limited by the sequence length and the stopband's width. Thus, their stopband energy tends to fluctuate, thereby having higher stopband attenuation. 
As for auto-correlation, since the SCAN algorithm minimizes the ISL, and the MM-PMM algorithm utilizes numerous approximations to minimize the PSL, the two algorithms are more likely to have higher PSL than the proposed one, which minimizes the PSL with fewer approximations. Although few approximations in the proposed algorithm may result in a large amount of computation time, it is acceptable for real applications since the user can just produce multiple waveforms offline for the scenarios and store them in memory for real-time usage. 

Aside from the excellent performance of our algorithm, we notice that even the SCAN algorithm outperforms the MM-PMM algorithm in these cases. 
This phenomenon may result from our choice of parameters in the MM-PMM algorithm; however, it is difficult to find suitable parameters. The MM-PMM algorithm is so sensitive to the parameters that casual selection of parameters usually leads to divergence, not to mention producing a sequence better than the one obtained from the SCAN algorithm. Besides, it is also possible that the MM-PMM algorithm is less suitable for these cases since it was initially proposed to minimize the local PSL instead of the overall PSL \cite{LU2022}. 
On the other hand, the SCAN algorithm usually converges to a pretty good result even when we casually choose the parameters. Additionally, since its main goal is to minimize the overall ISL instead of the local ISL, it may be reasonable for the SCAN algorithm to outperform the MM-PMM algorithm in the considered cases.

\section{Performance Analysis by a New Lower Bound}\label{sec:PerformanceAnalysis}
Since Algorithm \ref{algo:AM} does not necessarily achieve the global optimality, further evaluation of the obtained solutions is needed. 
In this section, we attempt to derive a lower bound of the problem \eqref{prob:original} for the performance evaluation of our solutions shown in Section \ref{sec:NumericalValidation}. Commonly, solving the Lagrange dual problem is a good method to get a lower bound; however, a direct derivation of the Lagrange dual problem of the problem \eqref{prob:original} is difficult. Hence, in the following subsection, we propose a new technique to obtain a new lower bound of the problem \eqref{prob:original}. 
This new technique is based on the concept of the Lagrange dual problem with slight modification of the Lagrangian to circumvent the difficulty we would have faced in the direct derivation of the Lagrange dual problem.

\subsection{Lower Bound for PSL}
To make the derivation easier, we first reformulate the problem \eqref{prob:original} into the following equivalent problem (similar to the steps in Section \ref{sec:reformulation}).
\begin{subequations}\label{prob:Lower Bound for PSL}
\begin{align}
& \underset{\mathbf{x}\in\mathbb{C}^{N}, t\in \mathbb{R}}{\mathrm{minimize}} &&t \\
& \text{subject to} && \vert \mathbf{x}^H\mathbf{N}_{N}^\ell\mathbf{x} \vert\leq t, \forall \ell \in \mathbb{Z}_N\backslash \{0\} \\
&&& \mathbf{x}^H\mathbf{F}(f_i)\mathbf{x}\leq U_{\mathrm{max}}, \forall i\in\mathbb{N}_{N_f}\\
&&& \mathbf{x}^H\mathbf{E}_{n}^{(N)}\mathbf{x} =1, \forall n\in \mathbb{Z}_{N},
\end{align}
\end{subequations}
where $f_i$'s are $N_f$ points uniformly chosen from $\mathcal{F}_{\mathrm{\text{stop}}}$. 
Note that this problem is also equivalent to the problems \eqref{prob:MatrixRepresentationOriginal}, \eqref{prob:MatrixRepresentation}, and \eqref{prob:equivalent}, and the reasons are presented in Section \ref{sec:reformulation} along with these problems.

With the problem \eqref{prob:Lower Bound for PSL} in quadratic forms, we can start our derivation with its Lagrangian.
The Lagrangian $L: \mathbb{R}\times \mathbb{C}^{N}\times \mathbb{R}^{N-1}\times \mathbb{R}^{N_f}\times \mathbb{R}^{N} \rightarrow \mathbb{R}$ associated with the problem \eqref{prob:Lower Bound for PSL} is defined as  \cite{boyd_vandenberghe_2004}
\begin{equation*}
\begin{aligned}
L(t,\mathbf{x}, \bm{\lambda},\bm{\mu},\bm{\nu}) = t &+ 
\sum_{\ell=1}^{N-1}\lambda_\ell\left(\vert \mathbf{x}^H\mathbf{N}_{N}^\ell\mathbf{x} \vert - t\right)\\&+
\sum_{i=0}^{N_f-1}\mu_i \left(\mathbf{x}^H\mathbf{F}(f_i)\mathbf{x}-U_{\mathrm{max}}\right) \\
&+\sum_{n=0}^{N-1}\nu_n \left(\mathbf{x}^H\mathbf{E}_{n}^{(N)}\mathbf{x} -1 \right),
\end{aligned}
\end{equation*}
where $f_i$'s are $N_f$ points uniformly chosen from $\mathcal{F}_{\mathrm{\text{stop}}}$.
Due to the non-smoothness of the terms $\vert \mathbf{x}^H\mathbf{N}_{N}^\ell\mathbf{x} \vert$, it is difficult to directly derive the Lagrange dual function \cite{boyd_vandenberghe_2004} 
\begin{equation*}
g(\bm{\lambda},\bm{\mu},\bm{\nu})=\underset{t \in \mathbb{R}, \mathbf{x}\in \mathbb{C}^N}{\mathrm{inf}} ~L(t,\mathbf{x},\bm{\lambda},\bm{\mu},\bm{\nu})
\end{equation*}
for $\bm{\lambda}\succeq \bm{0}$, $\bm{\mu}\succeq \bm{0}$, and $\bm{\nu}\in\mathbb{R}^N$.
To address the issue, we augment the original Lagrangian with a new vector $\bm\theta \in \mathbb{R}^{N-1}$ and define the ``modified Lagrangian'' $\tilde{L}: \mathbb{R}\times \mathbb{C}^{N}\times \mathbb{R}^{N-1}\times \mathbb{R}^{N-1}\times \mathbb{R}^{N_f}\times \mathbb{R}^{N} \rightarrow \mathbb{R}$ as
\begin{equation}
\begin{aligned}
&\quad \tilde{L}(t,\mathbf{x},\bm\theta,\bm{\lambda},\bm{\mu},\bm{\nu})\\
&= t +
\sum_{\ell=1}^{N-1}\lambda_\ell\left(\mathbf{x}^H\left(\frac{ e^{-j\theta_\ell}\mathbf{N}_{N}^\ell+ e^{j\theta_\ell}{\mathbf{N}_{N}^\ell}^T}{2}\right)\mathbf{x} - t\right) \\
&\,\,\,\,~~~ +\sum_{i=0}^{N_f-1}\mu_i \left(\mathbf{x}^H\mathbf{F}(f_i)\mathbf{x}-U_{\mathrm{max}}\right)\\
&\,~~~~ +\sum_{n=0}^{N-1}\nu_n \left(\mathbf{x}^H\mathbf{E}_{n}^{(N)}\mathbf{x} -1 \right). 
\end{aligned}\label{eq:ModifiedLagrangian}
\end{equation}
 The next lemma characterizes the relation between the original Lagrangian and the modified one.
\begin{lem} \label{lemma:L>=}
For $\bm{\lambda} \succeq 0$ and $\bm\theta \in \mathbb{R}^{N-1}$, we always have 
\begin{equation*}
    L(t,\mathbf{x}, \bm{\lambda},\bm{\mu},\bm{\nu})\geq \tilde{L}(t,\mathbf{x},\bm\theta, \bm{\lambda},\bm{\mu},\bm{\nu}).
\end{equation*}
\end{lem}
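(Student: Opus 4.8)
The plan is to exploit the fact that $L$ and $\tilde{L}$ are identical except in their first summations; every term involving $\bm\mu$ and $\bm\nu$, as well as the leading $t$ and the $-\lambda_\ell t$ contributions, cancels verbatim in the difference $L-\tilde{L}$. Consequently the whole statement reduces to showing that
\begin{equation*}
L-\tilde{L}=\sum_{\ell=1}^{N-1}\lambda_\ell\left[\,\bigl\lvert \mathbf{x}^H\mathbf{N}_{N}^\ell\mathbf{x}\bigr\rvert-\mathbf{x}^H\!\left(\frac{e^{-j\theta_\ell}\mathbf{N}_{N}^\ell+e^{j\theta_\ell}{\mathbf{N}_{N}^\ell}^T}{2}\right)\!\mathbf{x}\,\right]\geq 0,
\end{equation*}
and since each $\lambda_\ell\geq 0$ by hypothesis $\bm\lambda\succeq 0$, it suffices to prove that every bracketed term is nonnegative.

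The key step is to rewrite the quadratic form appearing in $\tilde L$ in terms of the correlation $r_\ell=\mathbf{x}^H\mathbf{N}_{N}^\ell\mathbf{x}$ defined in \eqref{cst:psl_matrix}. Because $\mathbf{N}_N$ is a real matrix, $(\mathbf{N}_{N}^\ell)^T=(\mathbf{N}_{N}^\ell)^H$, so $\mathbf{x}^H{\mathbf{N}_{N}^\ell}^T\mathbf{x}=\overline{\mathbf{x}^H\mathbf{N}_{N}^\ell\mathbf{x}}=r_\ell^{*}$. Substituting this yields
\begin{equation*}
\mathbf{x}^H\!\left(\frac{e^{-j\theta_\ell}\mathbf{N}_{N}^\ell+e^{j\theta_\ell}{\mathbf{N}_{N}^\ell}^T}{2}\right)\!\mathbf{x}=\frac{e^{-j\theta_\ell}r_\ell+e^{j\theta_\ell}r_\ell^{*}}{2}=\operatorname{Re}\!\left(e^{-j\theta_\ell}r_\ell\right),
\end{equation*}
where the last equality uses that $e^{j\theta_\ell}r_\ell^{*}$ is the complex conjugate of $e^{-j\theta_\ell}r_\ell$. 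Thus the modification simply replaces the nonsmooth modulus $\lvert r_\ell\rvert$ by the smooth (indeed quadratic in $\mathbf{x}$) quantity $\operatorname{Re}(e^{-j\theta_\ell}r_\ell)$, which is the entire purpose of introducing $\bm\theta$.

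The remaining estimate is then the elementary scalar inequality $\operatorname{Re}(e^{-j\theta_\ell}r_\ell)\leq\bigl\lvert e^{-j\theta_\ell}r_\ell\bigr\rvert=\lvert r_\ell\rvert$, valid for every real $\theta_\ell$ and every complex $r_\ell$. This makes each bracketed term $\lvert r_\ell\rvert-\operatorname{Re}(e^{-j\theta_\ell}r_\ell)$ nonnegative, and multiplying by $\lambda_\ell\geq 0$ and summing over $\ell$ completes the argument. I do not anticipate any genuine obstacle here: the only point requiring minor care is the reality of $\mathbf{N}_N$ that identifies $\mathbf{x}^H{\mathbf{N}_{N}^\ell}^T\mathbf{x}$ with $r_\ell^{*}$; once that identity is in place the inequality is immediate.
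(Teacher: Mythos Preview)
Your proof is correct and follows essentially the same approach as the paper: both reduce the comparison $L\geq\tilde L$ to the scalar inequality $\lvert r_\ell\rvert\geq\operatorname{Re}(e^{-j\theta_\ell}r_\ell)$ and then sum against the nonnegative weights $\lambda_\ell$. The paper reaches that scalar inequality via a Cauchy--Schwarz argument in $\mathbb{R}^2$ on the real and imaginary parts of $r_\ell$, whereas you obtain it directly from $\operatorname{Re}(z)\leq\lvert z\rvert$ after identifying $\mathbf{x}^H{\mathbf{N}_N^\ell}^T\mathbf{x}=r_\ell^{*}$; the two are mathematically equivalent, and your route is the more economical of the two.
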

\begin{proof}
Firstly, we express $\mathbf{x}^H\mathbf{N}_{N}^\ell\mathbf{x}$ as $a_\ell+b_\ell j$, where $j=\sqrt{-1}$ and $a_\ell,b_\ell\in\mathbb{R}$. Then, for any $c_\ell, d_\ell\in\mathbb{R}$, by the Cauchy-Schwartz inequality, we have 
\begin{equation*}
\begin{aligned}
\vert\mathbf{x}^H\mathbf{N}_{N}^\ell\mathbf{x} \vert\cdot\sqrt{c_\ell^2+d_\ell^2} &= \sqrt{a_\ell^2+b_\ell^2} \cdot\sqrt{c_\ell^2+d_\ell^2}\\
&\geq a_\ell c_\ell+b_\ell d_\ell\\
& = c_\ell \cdot \mathrm{Re}\{\mathbf{x}^H\mathbf{N}_{N}^\ell\mathbf{x}\} + d_\ell \cdot \mathrm{Im}\{\mathbf{x}^H\mathbf{N}_{N}^\ell\mathbf{x}\} \\
&= c_\ell \cdot \mathbf{x}^H\frac{\mathbf{N}_{N}^\ell+{\mathbf{N}_{N}^\ell}^T}{2}\mathbf{x} + d_\ell \cdot \mathbf{x}^H\frac{\mathbf{N}_{N}^\ell-{\mathbf{N}_{N}^\ell}^T}{2j}\mathbf{x} \\
& = \mathbf{x}^H\left(\frac{(c_\ell-d_\ell j)\mathbf{N}_{N}^\ell+(c_\ell+d_\ell j){\mathbf{N}_{N}^\ell}^T}{2}\right)\mathbf{x}.
\end{aligned}
\end{equation*}
Therefore,
\begin{equation}
\begin{aligned}
\vert\mathbf{x}^H\mathbf{N}_{N}^\ell\mathbf{x} \vert & \geq \mathbf{x}^H\left(\frac{(c_\ell-d_\ell j)\mathbf{N}_{N}^\ell+(c_\ell+d_\ell j){\mathbf{N}_{N}^\ell}^T}{2\sqrt{c_\ell^2+d_\ell^2}}\right)\mathbf{x}\\
&= \mathbf{x}^H\left(\frac{(A_\ell e^{-j\theta_\ell})\mathbf{N}_{N}^\ell+(A_\ell e^{j\theta_\ell}){\mathbf{N}_{N}^\ell}^T}{2A_\ell}\right)\mathbf{x}\\
&= \mathbf{x}^H\left(\frac{ e^{-j\theta_\ell}\mathbf{N}_{N}^\ell+ e^{j\theta_\ell}{\mathbf{N}_{N}^\ell}^T}{2}\right)\mathbf{x},
\end{aligned}\label{eq:ineq_modeified_Lagrangian}
\end{equation}
where $A_\ell e^{j\theta_\ell}$ is the polar form of $c_\ell+ d_\ell j $. By replacing the terms $\vert\mathbf{x}^H\mathbf{N}_{N}^\ell\mathbf{x} \vert$ in $L(t,\mathbf{x},\bm{\lambda},\bm{\mu},\bm{\nu})$ with $ \mathbf{x}^H\left(\frac{ e^{-j\theta_\ell}\mathbf{N}_{N}^\ell+ e^{j\theta_\ell}{\mathbf{N}_{N}^\ell}^T}{2}\right)\mathbf{x},$ we will obtain $\tilde{L}(t,\mathbf{x}, \bm\theta,\bm{\lambda},\bm{\mu},\bm{\nu})$. Then, due to the assumption $\bm\lambda \succeq 0$ and inequality \eqref{eq:ineq_modeified_Lagrangian}, we have 
$$L(t,\mathbf{x},\bm{\lambda},\bm{\mu},\bm{\nu})\geq \tilde{L}(t,\mathbf{x}, \bm\theta,\bm{\lambda},\bm{\mu},\bm{\nu}).$$ 
In addition, the equality can be achieved by some proper choice of $\bm\theta$.
\end{proof}

With the important inequality provided in Lemma \ref{lemma:L>=}, we will be able to derive a problem with its optimal value serving as an lower bound of the problem \eqref{prob:Lower Bound for PSL} later.
For the convenience of further derivation, we reformulate the modified Lagrangian in \eqref{eq:ModifiedLagrangian} as
\begin{align*}
&\quad \tilde{L}(t,\mathbf{x},\bm\theta, \bm{\lambda},\bm{\mu},\bm{\nu})\\
&= t \left(1-\sum_{\ell=1}^{N-1}\lambda_\ell\right)  -\sum_{i=1}^{N_f}\mu_i U_{\mathrm{max}} -\sum_{n=0}^{N-1}\nu_n+\mathbf{x}^H \mathbf{M}_{\bm{\theta}}(\bm{\lambda},\bm{\mu},\bm{\nu}) \mathbf{x},
\end{align*}
where 
\begin{align}
\mathbf{M}_{\bm{\theta}}(\bm{\lambda},\bm{\mu},\bm{\nu})&=\sum_{\ell=1}^{N-1}\frac{\lambda_\ell e^{-j\theta_\ell}\mathbf{N}_{N}^\ell+ \lambda_\ell e^{j\theta_\ell}{\mathbf{N}_{N}^\ell}^T}{2}\nonumber \\
&\quad + \sum_{i=0}^{N_f-1}\mu_i\mathbf{F}(f_i)+ \sum_{n=0}^{N-1}\nu_n\mathbf{E}_{n}^{(N)}.\label{eq:M_matrix}
\end{align}
Inspired by the relation of Lagrangian and the dual function presented in \cite{boyd_vandenberghe_2004}, we define a ``modified dual function'' via $\tilde{L}$ for each $\bm\theta \in \mathbb{R}^{N-1}$ as follows:
\begin{align}
\tilde{g}_{\bm\theta}(\bm{\lambda},\bm{\mu},\bm{\nu}) 
&=\underset{t \in \mathbb{R}, \mathbf{x}\in \mathbb{C}^N}{\mathrm{inf}} ~\tilde{L}(t,\mathbf{x},\bm\theta,\bm{\lambda},\bm{\mu},\bm{\nu})\nonumber\\
& =  - U_{\mathrm{max}}\sum_{i=0}^{N_f-1}\mu_i - \sum\limits_{n=0}^{N-1}\nu_n, \label{eq:g_tilde_theta_lambda_mu_nu}
\end{align}
with 
\begin{align}
&\quad\boldsymbol{\mathbf{dom}}~\tilde{g}_{\bm\theta}\nonumber \\
&=\Bigg\{ ( \bm{\lambda},\bm{\mu},\bm{\nu}) \Bigg\vert \bm\lambda \succeq 0,\bm\mu \succeq0, \sum\limits_{\ell=1}^{N-1}\lambda_\ell=1, \mathbf{M}_{\bm{\theta}}(\bm{\lambda},\bm{\mu},\bm{\nu})\succeq 0\Bigg\},\label{eq:dom_g_tilde_theta}
\end{align}
where $\mathbf{M}_{\bm{\theta}}(\bm{\lambda},\bm{\mu},\bm{\nu})$ was defined in \eqref{eq:M_matrix}.
Then, we have the following theorem.
\begin{thm}\label{thm:DualFunctionLowerBound}
Suppose $\bm\theta \in \mathbb{R}^{N-1}$. Then, for any feasible point $(t', \mathbf{x}')$ of the problem \eqref{prob:Lower Bound for PSL}, we always have
\begin{equation}\label{ineq:Lower Bound for PSL}
\tilde{g}_{\bm\theta}(\bm{\lambda},\bm{\mu},\bm{\nu}) \leq t'
,\forall (\bm{\lambda},\bm{\mu},\bm{\nu})\in \boldsymbol{\mathbf{dom}}~\tilde{g}_{\bm\theta}.
\end{equation}

\end{thm}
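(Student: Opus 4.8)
The plan is to read \eqref{ineq:Lower Bound for PSL} as a weak-duality statement for the modified Lagrangian and to obtain it by chaining three elementary inequalities. Fix an arbitrary multiplier triple $(\bm{\lambda},\bm{\mu},\bm{\nu})\in\boldsymbol{\mathbf{dom}}~\tilde{g}_{\bm{\theta}}$ and an arbitrary feasible point $(t',\mathbf{x}')$ of \eqref{prob:Lower Bound for PSL}. Because $\tilde{g}_{\bm{\theta}}$ is defined as the infimum of $\tilde{L}$ over all $(t,\mathbf{x})\in\mathbb{R}\times\mathbb{C}^N$, the first link $\tilde{g}_{\bm{\theta}}(\bm{\lambda},\bm{\mu},\bm{\nu})\le \tilde{L}(t',\mathbf{x}',\bm{\theta},\bm{\lambda},\bm{\mu},\bm{\nu})$ is immediate from the very definition of infimum, and notably does not require the closed form \eqref{eq:g_tilde_theta_lambda_mu_nu} nor that the infimum be attained. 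It therefore suffices to bound the modified Lagrangian at the feasible point, i.e.\ to show $\tilde{L}(t',\mathbf{x}',\bm{\theta},\bm{\lambda},\bm{\mu},\bm{\nu})\le t'$.

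For the second link I would invoke Lemma \ref{lemma:L>=}. Since membership in $\boldsymbol{\mathbf{dom}}~\tilde{g}_{\bm{\theta}}$ forces $\bm{\lambda}\succeq 0$, the lemma yields $\tilde{L}(t',\mathbf{x}',\bm{\theta},\bm{\lambda},\bm{\mu},\bm{\nu})\le L(t',\mathbf{x}',\bm{\lambda},\bm{\mu},\bm{\nu})$, reducing the task to bounding the \emph{original} Lagrangian at $(t',\mathbf{x}')$. The third link is the classical complementary-slackness computation: evaluating $L$ at the feasible point, the leading $t'$ is the objective value and I would verify that each remaining sum is nonpositive. Feasibility gives $\lvert\mathbf{x}'^H\mathbf{N}_{N}^\ell\mathbf{x}'\rvert-t'\le 0$, and with $\lambda_\ell\ge 0$ every $\ell$-term is $\le 0$; feasibility gives $\mathbf{x}'^H\mathbf{F}(f_i)\mathbf{x}'-U_{\mathrm{max}}\le 0$, and with $\mu_i\ge 0$ every $i$-term is $\le 0$; finally each $\nu_n$-term vanishes because $\mathbf{x}'^H\mathbf{E}_{n}^{(N)}\mathbf{x}'-1=0$ holds exactly, so no sign condition on $\bm{\nu}$ is needed. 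Hence $L(t',\mathbf{x}',\bm{\lambda},\bm{\mu},\bm{\nu})\le t'$, and concatenating the three links delivers \eqref{ineq:Lower Bound for PSL}.

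I do not anticipate a genuine obstacle, as the whole argument is weak duality transported through the modification; the single point demanding care is keeping the direction of Lemma \ref{lemma:L>=} straight, namely that it lower-bounds $L$ by $\tilde{L}$ \emph{precisely} under $\bm{\lambda}\succeq 0$, which is exactly the sign constraint encoded in $\boldsymbol{\mathbf{dom}}~\tilde{g}_{\bm{\theta}}$. An equally short self-contained route would skip $L$ entirely: apply the inequality \eqref{eq:ineq_modeified_Lagrangian} termwise to the quadratic forms $\mathbf{x}'^H\big(\tfrac{e^{-j\theta_\ell}\mathbf{N}_{N}^\ell+e^{j\theta_\ell}{\mathbf{N}_{N}^\ell}^T}{2}\big)\mathbf{x}'$ that appear inside $\tilde{L}(t',\mathbf{x}',\dots)$, bounding each by $\lvert\mathbf{x}'^H\mathbf{N}_{N}^\ell\mathbf{x}'\rvert\le t'$, and then repeating the same complementary-slackness bookkeeping directly on $\tilde{L}$. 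Both routes close the proof; I would present the Lemma \ref{lemma:L>=} version for brevity.
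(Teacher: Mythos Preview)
Your proposal is correct and mirrors the paper's proof essentially line for line: the paper also chains $\tilde{g}_{\bm\theta}(\bm{\lambda},\bm{\mu},\bm{\nu})\le \tilde{L}(t',\mathbf{x}',\bm\theta,\bm{\lambda},\bm{\mu},\bm{\nu})\le L(t',\mathbf{x}',\bm{\lambda},\bm{\mu},\bm{\nu})\le t'$, invoking the infimum definition, Lemma~\ref{lemma:L>=}, and feasibility with the sign conditions $\bm\lambda\succeq 0$, $\bm\mu\succeq 0$ in exactly that order. Your extra remarks on why no sign condition on $\bm\nu$ is needed and on the alternative route through \eqref{eq:ineq_modeified_Lagrangian} are sound but go slightly beyond what the paper records.
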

\begin{proof}
$$
\begin{aligned}
\tilde{g}_{\bm\theta}( \bm{\lambda},\bm{\mu},\bm{\nu}) &= \underset{t \in \mathbb{R}, \mathbf{x}\in \mathbb{C}^N}{\mathrm{inf}} ~\tilde{L}(t,\mathbf{x},\bm\theta, \bm{\lambda},\bm{\mu},\bm{\nu})\\
&\leq \tilde{L}(t',\mathbf{x}',\bm\theta, \bm{\lambda},\bm{\mu},\bm{\nu}) \\
&\leq L(t',\mathbf{x}', \bm{\lambda},\bm{\mu},\bm{\nu})\\
& \leq t'.
\end{aligned} 
$$
The second inequality follows from Lemma \ref{lemma:L>=}. The third inequality is the result of $\bm\lambda \succeq 0,~\bm\mu \succeq0$, and the assumption that $(t', \mathbf{x}')$ is a feasible point of the problem \eqref{prob:Lower Bound for PSL}. 
\end{proof}
With Theorem \ref{thm:DualFunctionLowerBound}, we know that the modified dual function can always provide a lower bound for the problem \eqref{prob:Lower Bound for PSL} no matter which $\bm\theta$ is chosen. Then, in order to obtain the largest lower bound, we firstly demonstrate the following corollary.
\begin{cor}\label{cor:dual_function}
Suppose $t^\star$ is the optimal value of the problem \eqref{prob:Lower Bound for PSL}. Then, \begin{equation*}
   \underset{\begin{subarray}{c}\bm\theta \in \mathbb{R}^{N-1},\\ (\bm\lambda,\bm\mu,\bm\nu)  \in \boldsymbol{\mathbf{dom}}~ \tilde{g}_{\bm\theta}\end{subarray}}{\mathrm{sup}}
   \tilde{g}_{\bm\theta}( \bm\lambda,\bm\mu,\bm\nu) \leq t^\star.
    \end{equation*}
\end{cor}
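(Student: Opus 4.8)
The plan is to deduce this corollary directly from Theorem~\ref{thm:DualFunctionLowerBound} by a standard weak-duality argument, passing from a pointwise bound that holds at every feasible point to a bound by the optimal value, and then taking a supremum.

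First I would fix an arbitrary $\bm\theta \in \mathbb{R}^{N-1}$ together with an arbitrary $(\bm\lambda,\bm\mu,\bm\nu) \in \boldsymbol{\mathbf{dom}}~\tilde{g}_{\bm\theta}$. By Theorem~\ref{thm:DualFunctionLowerBound}, the inequality $\tilde{g}_{\bm\theta}(\bm\lambda,\bm\mu,\bm\nu) \le t'$ holds for \emph{every} feasible point $(t',\mathbf{x}')$ of the problem~\eqref{prob:Lower Bound for PSL}. Since the left-hand side does not depend on the particular feasible point chosen, I would take the infimum of the right-hand side over all feasible $(t',\mathbf{x}')$. Because $t^\star$ is by definition the optimal (minimal) value of~\eqref{prob:Lower Bound for PSL}, that infimum equals $t^\star$, which yields $\tilde{g}_{\bm\theta}(\bm\lambda,\bm\mu,\bm\nu) \le t^\star$.

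The final step is to observe that the bound just obtained is uniform in the dual variables: it holds for every admissible pair $(\bm\theta,(\bm\lambda,\bm\mu,\bm\nu))$ with $(\bm\lambda,\bm\mu,\bm\nu) \in \boldsymbol{\mathbf{dom}}~\tilde{g}_{\bm\theta}$, and the common upper bound $t^\star$ does not depend on that pair. Taking the supremum of $\tilde{g}_{\bm\theta}(\bm\lambda,\bm\mu,\bm\nu)$ over all such pairs therefore preserves the inequality, and I would conclude
$$
\sup_{\substack{\bm\theta \in \mathbb{R}^{N-1},\\ (\bm\lambda,\bm\mu,\bm\nu)\in \boldsymbol{\mathbf{dom}}~\tilde{g}_{\bm\theta}}} \tilde{g}_{\bm\theta}(\bm\lambda,\bm\mu,\bm\nu) \le t^\star.
$$

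I do not expect a serious obstacle here, since essentially all of the content is carried by Theorem~\ref{thm:DualFunctionLowerBound}. The only points needing slight care are the logical passage from ``$\tilde{g}_{\bm\theta}\le t'$ for all feasible $t'$'' to ``$\tilde{g}_{\bm\theta}\le t^\star$'', which relies on $t^\star$ being the infimum of the feasible $t'$ values (so that attainment of the optimum is not required), and the tacit assumption that the index set of the supremum is nonempty so that the supremum is well defined.
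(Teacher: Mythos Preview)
Your proposal is correct and follows essentially the same approach as the paper: the paper's proof simply says the result is directly derived from Theorem~\ref{thm:DualFunctionLowerBound} by taking the supremum over $(\bm\theta,\bm\lambda,\bm\mu,\bm\nu)$ on the left side of~\eqref{ineq:Lower Bound for PSL}. Your version spells out the intermediate passage from the pointwise bound to the bound by $t^\star$ via the infimum over feasible points, which the paper leaves implicit, but the argument is the same.
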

\begin{proof}
The result can be directly derived from $\textbf{Theorem \ref{thm:DualFunctionLowerBound}}$ by taking supremum over $(\bm\theta,\bm\lambda,\bm\mu,\bm\nu)$ on the left side of \eqref{ineq:Lower Bound for PSL}.
\end{proof}
Then, we can rewrite $\textbf{Corollary \ref{cor:dual_function}}$ as an optimization problem as follows.
\begin{subequations}\label{prob: Largest Lower Bound for PSL}
\begin{align}
&\underset{\begin{subarray}{c}\bm\theta \in\mathbb{R}^{N-1},\\
(\bm\lambda,\bm\mu,\bm\nu) \in  \mathbb{R}^{N-1}\times\mathbb{R}^{N_{f}}\times\mathbb{R}^{N}\end{subarray}}{\mathrm{maximize}} &&  - U_{\mathrm{max}}\sum_{i=0}^{N_f-1}\mu_i - \sum_{n=0}^{N-1}\nu_n \\
&~~~~~~\text{subject to}
&& \mathbf{M}_{\bm{\theta}}(\bm{\lambda},\bm{\mu},\bm{\nu})\succeq 0\\
&&& \sum_{\ell=1}^{N-1}\lambda_\ell=1\\
&&& \bm\lambda \succeq 0,~\bm\mu \succeq 0,
\end{align}
\end{subequations}
where $\mathbf{M}_{\bm{\theta}}(\bm{\lambda},\bm{\mu},\bm{\nu})$ was defined in \eqref{eq:M_matrix}.
Since $\lambda_\ell \geq 0$, $\lambda_\ell$ and $\theta_\ell$ can be combined further as $\lambda_\ell e^{j\theta_\ell}$, which enables us to substitute a new complex variable $y_\ell$ for $\lambda_\ell e^{j\theta_\ell}$. Therefore, the problem \eqref{prob: Largest Lower Bound for PSL} can be reformulated as follows.
\begin{subequations}\label{prob: Combined Largest Lower Bound for PSL}
\begin{align}
& \underset{\begin{subarray}{c}\mathbf{y}\in\mathbb{C}^{N-1},\\(\bm\mu,\bm\nu) \in  \mathbb{R}^{N_{f}}\times\mathbb{R}^{N}\end{subarray}}{\mathrm{maximize}} && - U_{\mathrm{max}}\sum_{i=0}^{N_f-1}\mu_i - \sum_{n=0}^{N-1}\nu_n \\
&~~ \text{subject to} && \left\|\mathbf{y}\right\|_1=1\\
&&& \mathbf{M}(\mathbf{y},\bm\mu,\bm\nu)\succeq 0\\
&&& \bm\mu\succeq 0,
\end{align}
\end{subequations}
where $\mathbf{M}(\mathbf{y},\bm\mu,\bm\nu)$ is defined as
\begin{align}
\mathbf{M}(\mathbf{y},\bm\mu,\bm\nu)=&\sum_{\ell=1}^{N-1}\frac{y_\ell^{*}\mathbf{N}_{N}^\ell+ y_\ell{\mathbf{N}_{N}^\ell}^T}{2}+ \sum_{i=0}^{N_f-1}\mu_i\mathbf{F}(f_i)+ \sum_{n=0}^{N-1}\nu_n\mathbf{E}_{n}^{(N)}. \label{eq: M_y matrix}
\end{align}

Although both the problems \eqref{prob: Largest Lower Bound for PSL} and \eqref{prob: Combined Largest Lower Bound for PSL} are not convex optimization problems due to $\bm\theta$ in the problem \eqref{prob: Largest Lower Bound for PSL} and the constraint $\left\|\mathbf{y}\right\|_1=1$ in the problem \eqref{prob: Combined Largest Lower Bound for PSL}, we can relax the constraint $\left\|\mathbf{y}\right\|_1=1$ in the problem \eqref{prob: Combined Largest Lower Bound for PSL} as $\left\|\mathbf{y}\right\|_1\leq 1$ without losing any information about the lower bound (The explanation will be provided later). 

Therefore, instead of solving the problem \eqref{prob: Largest Lower Bound for PSL} 
or the problem \eqref{prob: Combined Largest Lower Bound for PSL}, we can solve the following convex optimization problem to obtain a lower bound for PSL.
\begin{subequations}\label{prob: Relaxed Largest Lower Bound for PSL}
\begin{align}
& \underset{\begin{subarray}{c}\mathbf{y}\in\mathbb{C}^{N-1},\\(\bm\mu,\bm\nu) \in  \mathbb{R}^{N_{f}}\times\mathbb{R}^{N}\end{subarray}}{\mathrm{maximize}} && - U_{\mathrm{max}}\sum_{i=0}^{N_f-1}\mu_i - \sum_{n=0}^{N-1}\nu_n \label{prob: Relaxed Largest Lower Bound for PSL (a)}\\
&~~ \text{subject to} && \left\|\mathbf{y}\right\|_1\leq 1\\
&&& \mathbf{M}(\mathbf{y},\bm\mu,\bm\nu)\succeq 0\\
&&& \bm\mu\succeq 0,
\end{align}
\end{subequations}
where $ \mathbf{M}(\mathbf{y},\bm\mu,\bm\nu) $ is defined in \eqref{eq: M_y matrix}. 
The reason why solving the problem \eqref{prob: Relaxed Largest Lower Bound for PSL} instead of the problem \eqref{prob: Combined Largest Lower Bound for PSL} is legitimate can be seen in the following theorem since a non-positive lower bound for PSL is always meaningless.
\begin{thm}\label{thm:relax_y}
The optimal value of the problem \eqref{prob: Relaxed Largest Lower Bound for PSL} is the same as that of the problem \eqref{prob: Combined Largest Lower Bound for PSL} whenever the optimal value of the problem \eqref{prob: Relaxed Largest Lower Bound for PSL} is not zero.
\end{thm}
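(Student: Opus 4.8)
The plan is to exploit the positive homogeneity that the objective and all of the constraints except the normalization on $\mathbf{y}$ enjoy, so that the two problems differ only in whether $\mathbf{y}$ is forced onto the boundary of the unit $\ell_1$ ball. First I would record the easy direction: every feasible point of \eqref{prob: Combined Largest Lower Bound for PSL}, which requires $\|\mathbf{y}\|_1=1$, is also feasible for \eqref{prob: Relaxed Largest Lower Bound for PSL}, which only requires $\|\mathbf{y}\|_1\le 1$, and achieves the same objective. Hence the optimal value of the relaxed problem is always at least that of the combined problem, and it remains only to establish the reverse inequality under the hypothesis that the relaxed optimum is nonzero.

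Next I would observe that $(\mathbf{y},\bm\mu,\bm\nu)=(0,0,0)$ is feasible for \eqref{prob: Relaxed Largest Lower Bound for PSL}: from \eqref{eq: M_y matrix} we get $\mathbf{M}(0,0,0)=0\succeq 0$, clearly $\|\mathbf{0}\|_1=0\le 1$ and $\bm\mu=0\succeq 0$, while the objective value is $0$. Consequently the relaxed optimal value is always nonnegative, so the hypothesis ``not zero'' is equivalent to ``strictly positive.'' This reduces the theorem to showing that a strictly positive relaxed optimum can be approached by points with $\|\mathbf{y}\|_1=1$, i.e. by points feasible for the combined problem.

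The key step is a scaling argument based on the joint positive homogeneity of the data. For any scalar $s>0$, the map in \eqref{eq: M_y matrix} satisfies $\mathbf{M}(s\mathbf{y},s\bm\mu,s\bm\nu)=s\,\mathbf{M}(\mathbf{y},\bm\mu,\bm\nu)$, so the constraints $\mathbf{M}\succeq 0$ and $\bm\mu\succeq 0$ are preserved under scaling by $s>0$, whereas the objective $-U_{\mathrm{max}}\sum_i\mu_i-\sum_n\nu_n$ and the quantity $\|\mathbf{y}\|_1$ are each multiplied by $s$. Take any feasible point of the relaxed problem with positive objective $v>0$; necessarily $\mathbf{y}\neq 0$, for otherwise scaling $(\bm\mu,\bm\nu)$ by arbitrarily large $s$ would keep the point feasible while driving the objective to $+\infty$, contradicting the finiteness of the positive optimum. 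Writing $c=\|\mathbf{y}\|_1\in(0,1]$ and scaling the whole triple by $1/c\ge 1$ then yields a point with $\|\mathbf{y}\|_1=1$, still feasible, and feasible for the combined problem, with objective $v/c\ge v$. Applying this to a maximizing sequence of the relaxed problem shows the combined optimal value is at least the relaxed one, which together with the first paragraph gives equality.

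I expect the main obstacle to be the careful treatment of the degenerate direction $\mathbf{y}=0$: one must argue that it cannot carry a strictly positive objective, since otherwise the relaxed problem would be unbounded and could not have a finite nonzero optimum, so that every near-optimal point legitimately has $\mathbf{y}\neq 0$ and can be rescaled onto $\|\mathbf{y}\|_1=1$. Handling non-attainment of the supremum through a maximizing sequence, rather than a single optimizer, is the only other point requiring attention; the remaining manipulations are routine consequences of the homogeneity of $\mathbf{M}$ and the linearity of the objective.
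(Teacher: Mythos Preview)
Your proposal is correct and takes essentially the same approach as the paper: both exploit the joint positive homogeneity of $\mathbf{M}(\mathbf{y},\bm\mu,\bm\nu)$ and of the objective to rescale any positive-objective feasible point with $\|\mathbf{y}\|_1<1$ onto $\|\mathbf{y}\|_1=1$ without decreasing the objective, and both rule out $\mathbf{y}=0$ via the unboundedness it would entail. The only cosmetic difference is that the paper argues by contradiction directly on an assumed optimizer $(\mathbf{y}^\star,\bm\mu^\star,\bm\nu^\star)$, whereas you phrase the same rescaling for an arbitrary near-optimal point and pass to a maximizing sequence, which is slightly more careful about attainment.
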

\begin{proof}
The proof of Theorem \ref{thm:relax_y} is given in Appendix \ref{appendix:proof_of_thm_relax_y}.
\end{proof}

Finally, due to the convexity of the problem \eqref{prob: Relaxed Largest Lower Bound for PSL}, it can be quickly solved via CVX and serve as an estimate of the duality gap \cite{boyd_vandenberghe_2004}.

\begin{rmk}
When the optimal value of the problem \eqref{prob: Relaxed Largest Lower Bound for PSL} is less than 1, that of the problem \eqref{prob: Combined Largest Lower Bound for PSL} will also be less than 1 since the feasible set of the problem \eqref{prob: Combined Largest Lower Bound for PSL} is a subset of that of the problem \eqref{prob: Relaxed Largest Lower Bound for PSL}. In this case, these two problems do not provide information on PSL since PSL is at least 1 because we always have $\vert r_{N-1}\vert = \vert r_{-N+1}\vert=1$.
\end{rmk}
Note that, based on our experimental experience, an optimal value not larger than one rarely occurs when the spectral constraints are properly set with $N_f\neq 0$ and $U_{\mathrm{max}}<N$. Therefore, the assumption of Theorem \ref{thm:relax_y} are generally true and the resulting lower bound from the problem \eqref{prob: Relaxed Largest Lower Bound for PSL} can usually be used to evaluate the designed waveform.

\subsection{Numerical Results}
Solving the problem \eqref{prob: Relaxed Largest Lower Bound for PSL} for different $N$, $U_{\mathrm{max}}$, $~\mathcal{F}_{\text{stop}}$, and $N_f$, we can obtain lower bounds for the optimal NPSL under different spectral constraints. 
These lower bounds can be used to approximate the distance between the optimal NPSL and the NPSLs of all the waveforms obtained from our algorithm.
We summarize the lower bounds of the cases previously presented in Section \ref{Sec:Numerical_Examples} in
Table \ref{LB-total}.
From this table, it can be seen that our algorithm attains solutions with duality gap less than 3 dB for the cases where $\mathcal{F}_{\mathrm{stop}}$ is $[0.2,0.3]$, which demonstrates the fact that the PSLs of these sequences we obtained are quite close to the optimal solution. 
Apart from the optimality evaluation in the PSL minimization problem, this theoretical lower bound can also provide all the other problems in similar forms with an alternative lower bound when their Lagrange dual problems are difficult to derive or solve. Therefore, aside from waveform design for active sensing systems, this proposed theory in lower bound may even be useful in a variety of applications.

\begin{table}[htbp]
\centering
\setlength\tabcolsep{3.85pt}
    \begin{threeparttable}[htbp]
        \caption{NPSL of the Numerical Examples and the Corresponding Lower Bounds (in decibel)}
        \label{LB-total}
        \begin{tabular*}{\columnwidth}{c c c c c c}
          \toprule
          \textbf{Case} &\textbf{SCAN} &\textbf{MM-PMM} & \makecell{\textbf{Proposed}\\ \textbf{Algorithm}}  & \makecell{\textbf{Proposed} \\\textbf{Lower Bound}}& \makecell{\textbf{Lower bound}\\\textbf{in} \cite{McCormick2017}} \\
          \midrule
          1 & -16.52&-14.68  & -18.18 & -20.27& -30.10\\
          2 & -17.91 &-16.20  & -21.16 & -22.92& -40.00\\
          3 & -25.08 &-20.86  & -26.88 & -32.00& -40.00\\
          4 & -25.56 &-24.58  & -28.69 & -32.86& -42.14\\
          5 & -18.72 &-18.49  & -22.40 & -23.95& -48.16\\
          \bottomrule
        \end{tabular*}
    \end{threeparttable}
\end{table}

\section{Discussions With Related Works}\label{sec:discussion}
In this section, we raise some discussions on the relationship of the problem studied in this paper and some broader class of waveform design problems in the fields of cognitive radars and active sensors.
Specifically, we will first elaborate on sequence design methods without special compatibility, and then on waveform design methods that seek to shape the waveform's ambiguity function \cite{Sussman1962, he_li_stoica_2012, Wolf1969} according to some application requirements.
These discussions intend to provide with a more comprehensive view that connects the contribution of this work and the broader literature on similar topics.
\subsection{Sequences Without Spectral Compatibility}
As mentioned in the introduction, many works \cite{He2009, Mohammad2017, Song2016April, Song2016June, Frank1963, Golomb1965, Hamid2017, Raei2022} have aimed at designing sequences without spectral compatibility. 
Without spectral constraints, they are usually expected to achieve a better auto-correlation than those with such constraints.
For example, reference \cite{Hamid2017} also studies methods to minimize PSL of the sequence's auto-correlation function, yet without a spectral constraint.
Here, we compare the sequence generated by the proposed algorithm in Case 2 in Section \ref{sec:case2} with the sequence generated by the PSL Optimization Cyclic Algorithm (POCA) in \cite{Hamid2017} and the Golomb sequence \cite{Golomb1993}. 
The results are shown in Fig. \ref{figN100F2030_no_spectral} and Table \ref{tableN100F2030_no_spectral}, with the former containing the normalized auto-correlations and energy spectra of all sequences in comparison, and the latter summarizing their normalized peak side-lobe levels and stopband attenuations. 
In the simulation, the POCA is initialized by the Golomb sequence and conducted with\footnote{In the POCA implemented here, the sequence is projected to unit circle for every iteration (i.e., the intermediate step for unimodularity constraint mentioned in \cite{Hamid2017}).} $N=100$, $Q=100$, and $\varepsilon=10^{-4}$. 
The result confirms the intuition that sequences with only the unimodular constraint 
have more freedom to achieve a lower PSL than the proposed one which 
was limited
by spectral constraints. 
We can therefore say that the proposed algorithm has sacrificed the PSL performance by a loss of around $8$ to $9$ dB in exchange for the capability of spectral compatibility, which is a critical feature in applications where the coexistence of multiple radios is demanded. 

\begin{table}[t]
  \centering
  \begin{threeparttable}
    \caption{NPSL and $A_{\mathrm{stop}}$ for different methods.}
    \label{tableN100F2030_no_spectral}
    \begin{tabular}{c c c}
      \toprule
      \textbf{Method} & \textbf{NPSL} (dB) & \textbf{{}Stopband  Attenuation} (dB) \\
      \midrule
      Golomb \cite{Golomb1993} & -26.32 & -1.25 \\
       POCA \cite{Hamid2017} & -29.99 &  -1.21 \\
      Proposed Algorithm &  -21.16 & 30.14 \\
      \bottomrule
    \end{tabular}
  \end{threeparttable}
\end{table}
\begin{figure}[t]
    \centering
    \includegraphics[width=0.48\textwidth]{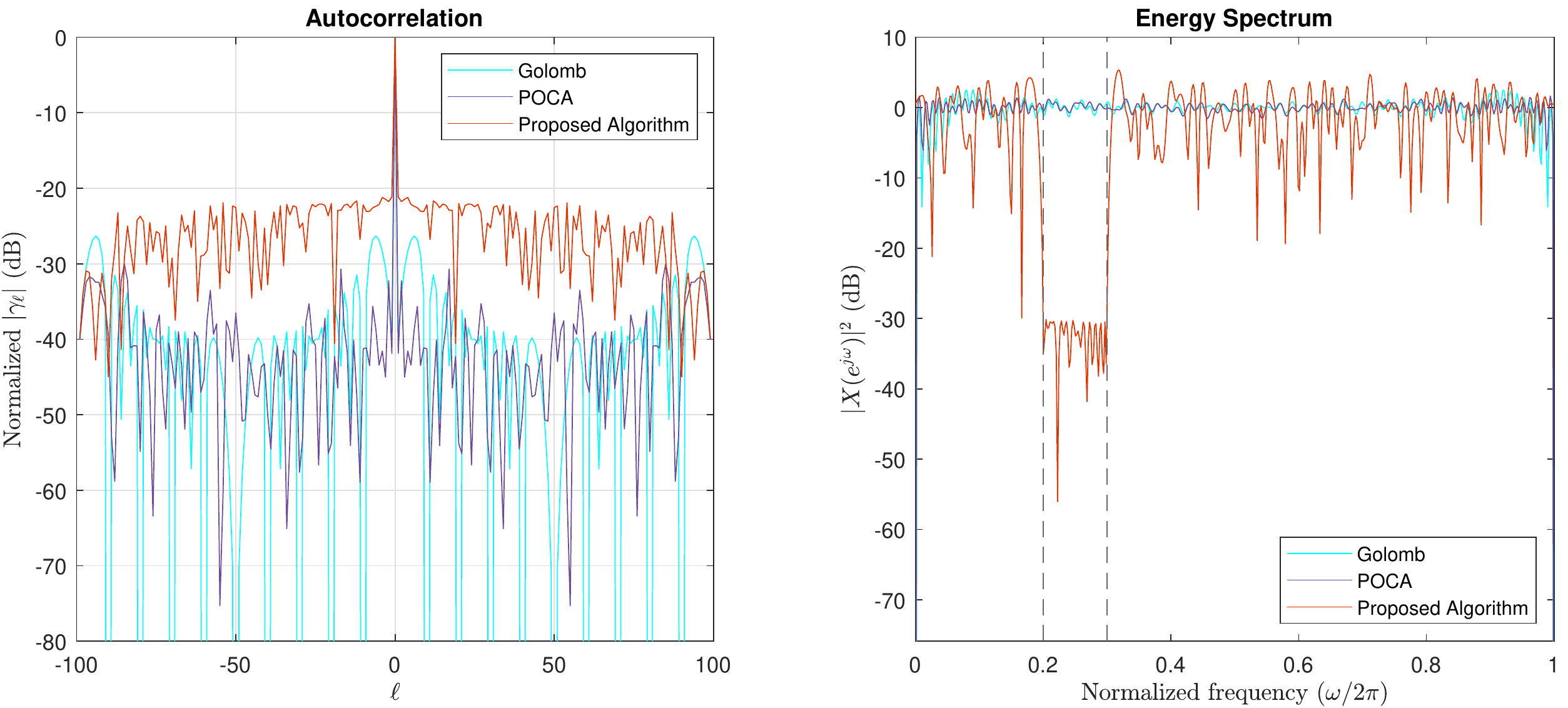}
    \caption{The auto-correlations and energy spectra comparison of the proposed one in Case 2 and other sequences without spectral compatibility}
    \label{figN100F2030_no_spectral}
\end{figure}

\subsection{Relations to Ambiguity Functions}
There are also a number of waveform design methods \cite{Feng2017, Latham2019, Mohammad2019, Wang2021, Cui2017_AF, Yang2018, Yang2020} in the recent literature that concern, in addition to behaviors of the sequences' auto-correlation functions, their ambiguity functions (AF).
As a two-variable function, an AF measures variations of the waveform's matched filter responses along both the range and Doppler domains and can provide much more information about the waveform's characteristics, particularly in the Doppler domain, compared to the single-variable auto-correlation function.
In fact, the auto-correlation function is just the zero-Doppler cut of an AF \cite{he_li_stoica_2012}.
This enables the AF-shaping waveform design methods to control the waveform's Doppler-domain requirements more directly, in order to adapt to various applications whose requirements may involve large relative speeds between targets and sensing systems.

Waveform design methods that consider AF-shaping may be roughly categorized into two purposes \cite[Ch. 6]{he_li_stoica_2012}: one aims to create \emph{Doppler-resilient} waveforms and the other focuses on generating \emph{Doppler-sensitive} waveforms.
Methods that focus on creating a Doppler-resilient waveform are to maintain good correlation properties even in the presence of a nonzero Doppler shift \cite{Feng2017, Latham2019, Mohammad2019, Wang2021}, at the expense of an error in the range estimation of a target. 
On the other hand, AF-shaping methods focusing on Doppler-sensitive waveforms generally
suppress the local side-lobes of the AF of the sequence, i.e., in both range and Doppler domains.
These methods \cite{Cui2017_AF, Yang2018, Yang2020} regulate the interference power from the unwanted return in certain range-Doppler bins, and generally produce waveforms whose AFs possess a so-called \emph{thumbtack-shaped} property\cite{he_li_stoica_2012}. 

In comparison, this paper, as well as many aforementioned works \cite{He2009, Mohammad2017, Song2016April, Song2016June, Frank1963, Golomb1965, Hamid2017, Raei2022, Cui2018, FAN2021107960, LU2022}, focuses \emph{only} on the shaping of auto-correlation, i.e., the zero-Doppler cut of AF. 
The shaping of auto-correlation without considering the whole AF might result in some unwanted properties in the presence of the Doppler effect. 
Nevertheless, when we plot the AF of the designed waveforms in this paper, we found that they also happen to resemble a thumbtack shape.
For example, in Fig. \ref{figAF_linear}, the discrete AF of the sequence generated by Algorithm \ref{algo:AM} in Case 2 in Section \ref{sec:case2} is shown\footnote{For conciseness, the readers are referred to \cite{he_li_stoica_2012, Cui2017_AF, Yang2020} for the exact definition of AF used for generating Fig. \ref{figAF_linear}.}. 
Even though we did not impose any constraints against the non-zero Doppler side-lobes in Algorithm \ref{algo:AM}, the magnitudes of responses on non-zero Doppler cuts of the AF does not significantly increase\footnote{Yet, the largest peak side-lobe of the AF in Fig. \ref{figAF_linear} is $-10.79$ dB, still considerably larger than the PSL in its auto-correlation ($-21.16$ dB).}, at least for normalized Doppler frequencies that are close to zero. 
One speculation about the reason is that it is initialized by the SCAN-generated sequences\cite{He2010}, which also have the thumbtack-like AFs.
Such a property may already make the waveform applicable in a sensing system that needs to detect moderate-speed targets.\footnote{Take an example excerpted from \cite{he_li_stoica_2012}: consider an X-band radar whose operating wavelength is $\lambda = 3$cm and a baseband sampling rate of $100$ MHz. Then, a fighter jet moving at speed $1020$ m/s (i.e., roughly Mach 3) would induce a Doppler frequency of only $f = 2v/\lambda = (2\times 1020)\mathrm{(m/s)}/0.03\mathrm{m} = 68\mathrm{kHz}$ \cite{he_li_stoica_2012}. If the sampling rate is $100$-MHz, then its normalized Doppler frequency\cite{Cui2017_AF,Yang2020}, defined as $2vT_s/\lambda$, is even smaller than $10^{-3}$.
}

According to the previous discussions, it is observed that when the application considers relatively high normalized Doppler frequency \cite{Cui2017_AF, Yang2018, Yang2020}, (e.g., very high-speed targets, using short wavelengths, or in a narrow-band scenario,  \cite{Yang2018}), shaping the AF becomes critical for probing sequence design. 
On the other hand, when the application scenarios concern only targets of moderate speed or the maximum possible target speed in the application induces just a tiny ``normalized Doppler frequency" (i.e., longer wavelengths are used, wide-band), a problem simpler than AF-shaping, such as considering only the auto-correlation function 
as in this paper, could already be sufficient in most applications that require detecting targets with moderate speed.


\begin{figure}[t]
    \centering
    \includegraphics[width=0.48\textwidth]{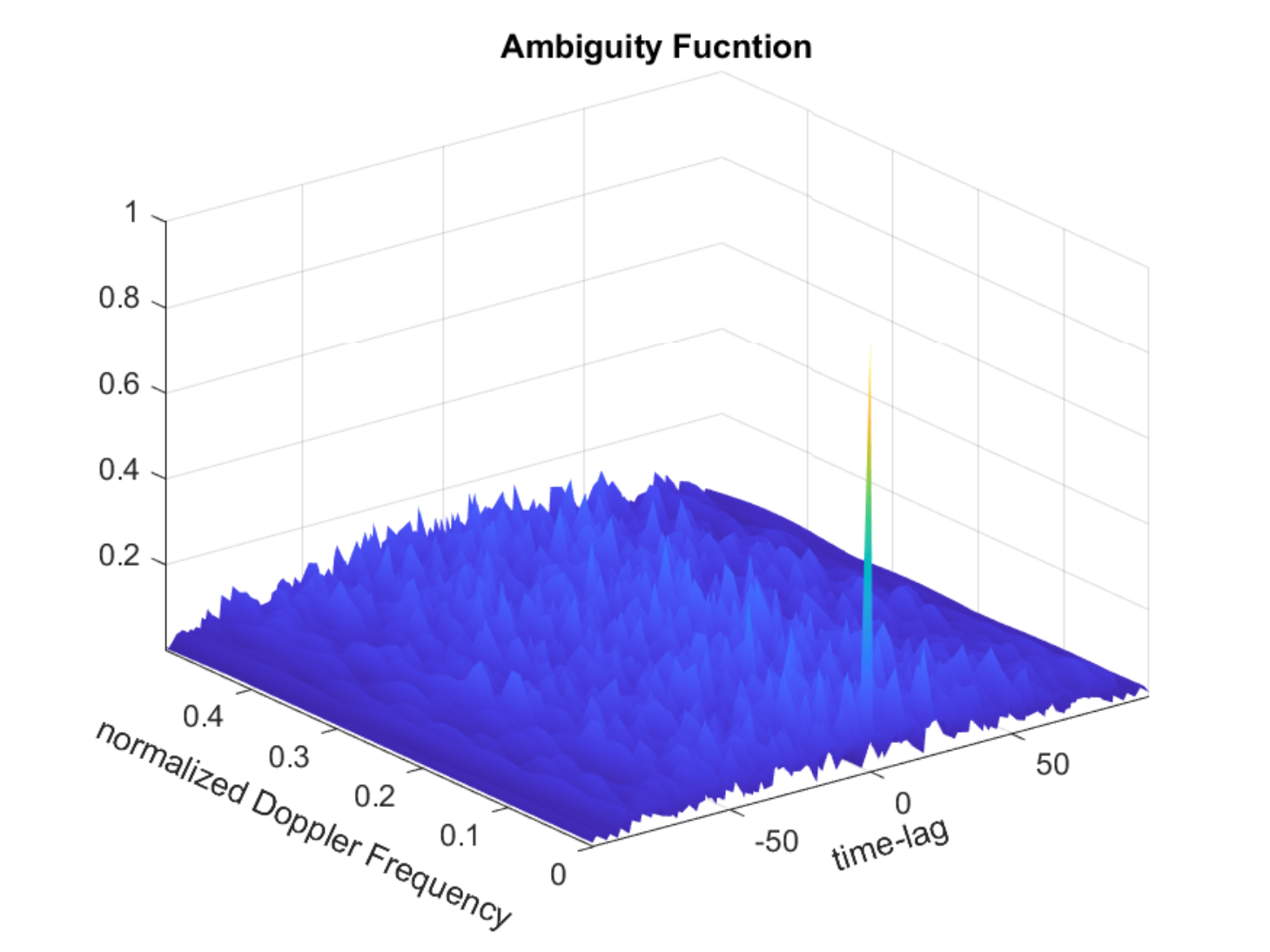}
    \caption{The ambiguity function of the sequence generated by the proposed algorithm in Case 2}
    \label{figAF_linear}
\end{figure}

\section{Conclusion}\label{sec:conclusion}
We propose a new algorithm via alternating minimization for the design of unimodular sequences with controllable spectral energy in predetermined stopbands and an approximately optimal PSL. 
Since the stopband requirements are not included in the objective function, the proposed method has more flexibility in spectral adjustment compared to other algorithms. Numerical results in Section \ref{sec:NumericalValidation} demonstrate the advantages of the proposed method both in the PSL and the spectral compatibility over SCAN. 
In addition, we also derived a lower bound for the PSL from its Lagrangian to evaluate the duality gap between the optimal value and the attained one. 
The key ideas of the derivation are the introduction of new variables and the replacement of non-smooth terms via Cauchy-Schwartz inequality. 
These skills are useful in obtaining a lower bound and can be applied to different kinds of problems, especially those associated with the absolute values of some complex quadratic forms of non-Hermitian matrices.
The numerical results of the derived lower bound for the PSL show that the proposed method has the potential of achieving a near-optimal solution due to the narrow duality gaps. 
In the future, a narrower duality gap may be able to be achieved by choosing different parameters in Algorithm \ref{algo:AM} or different penalty functions in the problem \eqref{prob:AM_first} and \eqref{prob:AM_second} to attain a lower PSL, or by finding a tighter lower bound. Other cases like local PSL minimization or considering different spectral masks may be able to be implemented by the proposed scheme via choosing a desired set for $\ell$ in the PSL constraint instead of $\mathbb{Z}_N\backslash \{0\}$ or via selecting different $U_{\mathrm{max}}$, say,  $U_{\mathrm{max},i}$,  for each $f_i$ in the stopband, respectively. 

%

\appendices

\section{Proof of Theorem \ref{thm:TwoMatrixIneq}}\label{appendix:proof_of_thm_TwoMatrixIneq}
Consider the real vector space $\mathbb{H}^n$ with the inner product defined by $\langle \mathbf{A},\mathbf{B}\rangle=\mathrm{tr}(\mathbf{AB})$. 
By Cauchy–Schwarz inequality, we have
\begin{equation} \label{eq:Cauchy-Schwarz}
\mathrm{tr}(\mathbf{AB})\leq \sqrt{\mathrm{tr(\mathbf{A^2})}}\sqrt{\mathrm{tr(\mathbf{B^2})}}=\sqrt{\sum_i\lambda_{i,a}^2}\sqrt{\sum_j\lambda_{j,b}^2},
\end{equation}
where $\lambda_{i,a}$ and $\lambda_{j,b}$ are the eigenvalues of $A$ and $B$, respectively. Since $\mathbf{A}$ and $\mathbf{B}$ are both positive semidefinite, their eigenvalues are all nonnegative. Hence, 
\begin{equation} \label{eq:ineq_for_rank_one_condition}
\sqrt{\sum_i\lambda_{i,a}^2}\sqrt{\sum_j\lambda_{j,b}^2}\leq \sum_i\lambda_{i,a}\sum_j\lambda_{j,b}=\mathrm{tr}(\mathbf{A})\mathrm{tr}(\mathbf{B}),
\end{equation}
where the inequality can be obvious by taking the square of both sides. 
When the equality of (\ref{eq:Cauchy-Schwarz}) holds, Cauchy–Schwarz inequality gives us that $\mathbf{A}$ and $\mathbf{B}$ are linearly dependent. 
In addition, when the equality of (\ref{eq:ineq_for_rank_one_condition}) holds, since 
\begin{equation}
\sqrt{\sum_i\lambda_{i,a}^2}=\sum_i\lambda_{i,a},\sqrt{\sum_j\lambda_{j,b}^2}=\sum_i\lambda_{j,b},
\end{equation}
and all the eigenvalues are non-negative, $\mathbf{A}$ and $\mathbf{B}$ are of rank at most one.
\section{Proof of Theorem \ref{thm:relax_y}}\label{appendix:proof_of_thm_relax_y}
In the problem \eqref{prob: Relaxed Largest Lower Bound for PSL}, since $(\mathbf{y},\bm\mu,\bm\nu)=(0,0,0)$ is always a feasible point, which results in a zero objective function value, we always have the optimal value being larger than or equal to zero.
Then, suppose the optimal value of the problem \eqref{prob: Relaxed Largest Lower Bound for PSL} is not zero, i.e., $d^\star=-U_{\mathrm{max}}\sum_{f_i\in\mathcal{F}_{\mathrm{\text{stop}}}}\mu_i^\star - \sum_{n=0}^{N-1}\nu_n^\star > 0$. We prove that $\left\|\mathbf{y}^\star\right\|_1=1$. Firstly, we assume that $\left\|\mathbf{y}^\star\right\|_1=\eta$ and $0<\eta<1$. By taking $\mathbf{y}^\prime =  \frac{1}{\eta}\mathbf{y}^\star,\bm\mu^\prime=\frac{1}{\eta}\bm\mu^\star, \bm\nu^\prime=\frac{1}{\eta}\bm\nu^\star$, we have 
$$
\begin{aligned}
\left\|\mathbf{y}^\prime\right\|_1=1,~\bm\mu^\prime\succeq 0,
\end{aligned}
$$
and
$$
\begin{aligned}
&\sum_{\ell=1}^{N-1}\frac{(y_\ell^\prime)^{*}\mathbf{N}_{N}^\ell+ y_\ell^\prime{\mathbf{N}_{N}^\ell}^T}{2} + \sum_{i=0}^{N_f-1}\mu_i^\prime\mathbf{F}(f_i)+ \sum_{n=0}^{N-1}\nu_n^\prime\mathbf{E}_{n}^{(N)}\\
&=\frac{1}{\eta}\left(\sum_{\ell=1}^{N-1}\frac{(y_\ell^\star)^{*}\mathbf{N}_{N}^\ell+ y_\ell^\star{\mathbf{N}_{N}^\ell}^T}{2} + \sum_{i=0}^{N_f-1}\mu_i^\star\mathbf{F}(f_i)+ \sum_{n=0}^{N-1}\nu_n^\star\mathbf{E}_{n}^{(N)}\right)\\
&\succeq \frac{1}{\eta}0 \\
&=0,
\end{aligned}
$$
which implies that $(\mathbf{y}^\prime,\bm\mu^\prime,\bm\nu^\prime)$ is a feasible point of the problem \eqref{prob: Relaxed Largest Lower Bound for PSL}.
However,
\begin{align*}
d^\prime &=- U_{\mathrm{max}}\sum_{i=0}^{N_f-1}\mu_i^\prime - \sum_{n=0}^{N-1}\nu_n^\prime\\
& = \frac{1}{\eta}\left(- U_{\mathrm{max}}\sum_{f_i\in\mathcal{F}_{\mathrm{\text{stop}}}}\mu_i^\star - \sum_{n=0}^{N-1}\nu_n^\star \right)\\
&= \frac{1}{\eta}d^\star \\
& > d^\star,
\end{align*}
which contradicts the assumption of optimality. Therefore, $\left\|\mathbf{y}\right\|_1 \notin (0,1)$.
Secondly, we assume $\left\|\mathbf{y}^\star\right\|_1=0$, which implies that $\mathbf{y}^\star=0$. Then, by taking any $\xi>1$ and setting $\bm\mu^\prime=\xi\bm\mu^\star,~ \bm\nu^\prime=\xi\bm\nu^\star$, we have
\begin{align*}
& \sum_{i=0}^{N_f-1}\mu_i^\prime\mathbf{F}(f_i)+ \sum_{n=0}^{N-1}\nu_n^\prime\mathbf{E}_{n}^{(N)}\\
&=\xi\left(\sum_{i=0}^{N_f-1}\mu_i^\star\mathbf{F}(f_i)+ \sum_{n=0}^{N-1}\nu_n^\star\mathbf{E}_{n}^{(N)}\right)\\
&\succeq \xi0 \\
&=0,
\end{align*}
which implies that $(0,\bm\mu^\prime,\bm\nu^\prime)$ is a feasible point of the problem \eqref{prob: Relaxed Largest Lower Bound for PSL}.
Nevertheless,

\begin{align*}
d^\prime &=- U_{\mathrm{max}}\sum_{i=0}^{N_f-1}\mu_i^\prime - \sum_{n=0}^{N-1}\nu_n^\prime\\
& = \xi\left(- U_{\mathrm{max}}\sum_{i=0}^{N_f-1}\mu_i^\star - \sum_{n=0}^{N-1}\nu_n^\star \right)\\
&= \xi d^\star \\
& > d^\star,
\end{align*}
which contradicts the assumption of optimality. As a result, we prove that $\left\|\mathbf{y}\right\|_1 = 1$ whenever the optimality is attained with the optimal value $d^\star$ being positive.
\section*{Acknowledgment}
\noindent This work was supported by the Ministry of Science and Technology of Taiwan under Grant MOST 110-2221-E-002-074.

\ifCLASSOPTIONcaptionsoff
  \newpage
\fi



\renewcommand*{\bibfont}{\footnotesize}
\bibliographystyle{IEEEtranN}
\bibliography{References}

%

%











\end{document}